\newtheorem{lemma}{Lemma}
\newtheorem{prop}{Proposition}
\newtheorem{corollary}{Corollary}
\DeclareMathOperator{\tr}{tr}
\begin{document}

\begin{CJK*}{UTF8}{}

\CJKfamily{gbsn}

\title{Information scrambling in chaotic systems with dissipation}
\author{Yong-Liang Zhang}
\email{ylzhang@caltech.edu}
\author{Yichen Huang (黄溢辰)}
\email{ychuang@caltech.edu}
\author{Xie Chen}
\email{xiechen@caltech.edu}

\affiliation{Department of Physics and Institute for Quantum Information and Matter, California Institute of Technology, Pasadena, California 91125, USA}

\date{\today}

\begin{abstract}
Chaotic dynamics in closed local quantum systems scrambles quantum information, which is manifested quantitatively in the decay of the out-of-time-ordered correlators (OTOC) of local operators. How is information scrambling affected when the system is coupled to the environment and suffers from dissipation? In this paper, we address this question by defining a dissipative version of OTOC and numerically study its behavior in a prototypical chaotic quantum chain in the presence of dissipation. We find that dissipation leads to not only the overall decay of the scrambled information due to leaking, but also structural changes so that the `information light cone' can only reach a finite distance even when the effect of overall decay is removed. Based on this observation we conjecture a modified version of the Lieb-Robinson bound in dissipative systems.
\end{abstract}
\maketitle

\end{CJK*}

\section{Introduction}
Chaos in quantum mechanical systems is characterized by the scrambling of quantum information. More specifically, suppose that information is encoded initially in a local operator $A$. Under the dynamics generated by a local Hamiltonian $H = \sum_{i} h_i$, $A(t) = e^{iHt} A e^{-iHt}$ grows in size and becomes non-local as $t$ increases. As $A$ grows in size, it starts to overlap with local operators $B$ at other spatial locations and ceases to commute with them. The effect of information scrambling is then manifested as the growth in the norm of the commutator $[A(t), B]$. Correspondingly it is also manifested as the decay of (the real part of) the out-of-time-ordered correlator (OTOC) \cite{LO69,Kit14,Kitaev,Shenker2014,Shenker201412,Shenker2015,Roberts:2014ab,Roberts2015,Roberts:2016aa,Hosur2016,Maldacena2016,polchinski2016spectrum,gu2016local,PRL118.086801,roberts2016chaos,patel2017,huang2017,Mezei2017,Lucas:2017aa,huang2016,FZSZ16,C16,SC16,HL16,chen2016,slagle2017out} $\langle A^{\dagger}(t)B^{\dagger}A(t)B\rangle_\beta$ which is related to the commutator as
\begin{equation}
\Re\,\langle A^{\dagger}(t)B^{\dagger}A(t)B\rangle_\beta = 1 - \frac{1}{2} \langle [A(t), B]^{\dagger}[A(t), B] \rangle_\beta,
\end{equation}
where local operators $A, B$ are both unitary, $\langle\cdot\rangle_\beta$ represents the thermal average at the inverse temperature $\beta=1/T$, and $\Re$ denotes the real part.

In a chaotic system, the decay of OTOC is usually expected to exhibit the following features: First, after time evolution for a very long time, information initially encoded in $A$ becomes highly nonlocal and cannot be accessed with any individual local operator $B$. Therefore, all OTOCs decay to zero at late time \cite{Roberts:2014ab,Hosur2016,roberts2016chaos,huang2017}
\begin{equation}
\lim_{t \to \infty} \Re\,\langle A^{\dagger}(t)B^{\dagger}A(t)B\rangle_{\beta=0} = 0.
\end{equation}

Secondly, in chaotic $0$-dimensional systems, the OTOC starts to decay at early time in an exponential way \cite{Hosur2016}
\begin{equation}
\Re\,\langle A^{\dagger}(t)B^{\dagger}A(t)B\rangle_\beta = f_1 - \frac{f_2}{n} e^{\lambda_L t} +O\left(\frac{1}{n^2}\right),
\end{equation}
where the constants $f_1, f_2$ depend on the choice of operators $A,B$, and $n$ is total number of degrees of freedom. The exponent of the exponential -- the Lyapunov exponent -- characterizes how chaotic the quantum dynamics is. It is bounded by $\lambda_L \leq \frac{2\pi}{\beta}$ \cite{Maldacena2016,Roberts:2016aa,Hosur2016,polchinski2016spectrum} and is expected to be saturated by quantum systems corresponding to black holes. 

Thirdly, in a system with spatial locality, information spreads at a certain speed, giving rise to a delay time before OTOC starts to decay. In some simple cases \cite{Shenker2014,Hosur2016,gu2016local,Song2017,Ben-Zion2017}, the early-time behavior of OTOC is described by
\begin{equation} \label{butterfly_velocity}
f'_1 - f'_2 e^{\lambda_L (t- d_{BA}/v_B)} +O(e^{-2\lambda_L d_{BA}/v_B})    
\end{equation}
with some constants $f'_1, f'_2$ that depend on $A, B$, and $d_{BA}$ is the distance between the local operators $A$ and $B$. That is, information spreads with a finite velocity $v_B$ -- the butterfly velocity -- and forms a `light cone' \cite{Roberts2015,Roberts:2016aa,Hosur2016}. In more general systems, the wave front of the light cone becomes wider while propagating out and Ref. \cite{Xu2018} gives an in depth study of the general form of the early time decay of OTOC. The deep connection between OTOC and quantum chaos generated a lot of interest in the topic, both theoretically and experimentally. Several protocols have been proposed to measure these unconventional correlators in real experimental systems \cite{Swingle:2016aa,ZHG16,YGS+16,tsuji2017exact,bohrdt2016,garttner2016,li2016measuring,Halpern:2017ab,Halpern:2017aa}. 

The measurement of OTOC in real experimental systems is complicated by the fact that the system is not exactly closed and suffers from dissipation through coupling to the environment. How does dissipation affect the measured signal of OTOC? More generally, we can ask how does dissipation affect information scrambling in a chaotic system? Dissipation leads to leakage of information, and therefore it is natural to expect that any signal of information scrambling would decay. Is it then possible to recover the signatures of information scrambling in a dissipative system and observe the existence of a light cone?

We address this question by studying numerically a prototypical model of chaotic spin chain \cite{PRL106.050405,Shenker2014,Roberts2015,Hosur2016} -- the Ising model with both transverse and longitudinal fields -- in the presence of some common types of dissipation: amplitude damping, phase damping and phase depolarizing. The Hamiltonian of the system with open boundary condition is 
\begin{equation} \label{ising}
H_s= - \sum_{i=1}^{N-1} \sigma^z_i \sigma^z_{i+1} - \sum_{i=1}^N (g \sigma^x_i + h \sigma^z_i),    
\end{equation}
where we set the parameters to be $g=-1.05$ and $ h=0.5$, and $N$ is the number of spins. We find that if OTOC is measured using the protocol given in Ref. \cite{Swingle:2016aa}, dissipation leads to the decay of the signal not only due to information leaking into the environment, but also information re-structuring. We define a corrected OTOC to remove the effect of leaking, so that the light cone can be recovered to some extent. However, due to the re-structuring, the recovered light cone only persists to a finite distance. 

The paper is organized as follows. In Sec. \ref{sec2}, we review the dynamics of dissipative systems and define a dissipative version of OTOC based on the measurement protocol given in Ref. \cite{Swingle:2016aa}. In Sec. \ref{sec3}, after observing the fast overall decay of the dissipative OTOC, we define a corrected OTOC to remove the effect of overall information leaking in the hope of recovering the information light cone. However, we see that the corrected light cone still only persists for a finite distance. In Sec. \ref{sec4}, we point out that the corrected light cone is finite due to information re-structuring and investigate the relationship between the width of the partially recovered light cone and the strength of dissipation. In Sec. \ref{sec5}, we conjecture a modified Lieb-Robinson bound for dissipative systems based on our observation regarding OTOC in the previous sections.

\section{Measurement of OTOC in dissipative systems}\label{sec2}
In this section, we provide a brief review of the dynamics of dissipative systems, and then generalize the definition of OTOC to dissipative systems based on the measurement protocol in Ref. \cite{Swingle:2016aa}.

A dissipative system is an open quantum system $S$ coupled to its environment $E$. In this coupled system, the total Hamiltonian is $H = H_{s} + H_{e} + H_{int}$, where $H_s(H_e)$ is the Hamiltonian of the system (environment) and $H_{int}$ is the interaction term. The reduced density matrix of the system $S$ changes as a consequence of its internal dynamics and the interaction with the environment $E$. In most cases, the initial state is assumed to be a product state $\rho_s(0)\otimes\rho_e(0)$. Under the Born, Markov and secular approximations, the dynamical evolution of a dissipative system $\rho_s(t) = \tr_e[ e^{-iHt}\rho_s(0)\otimes\rho_e (0) e^{iHt}] = \mathcal{V}(t) \cdot \rho_s (0)$ can be described by the Lindblad master equation \cite{breuer2002}
\begin{align}
&\frac{d \rho_s(t)}{d t} = \mathcal{L} \cdot \rho_s(t) = -i[H_s, \rho_s(t)] + \nonumber \\
&\sum_k \frac{\Gamma }{2}\Big( 2L_k\rho_s(t)L_k^\dag- \rho_s(t)L_k^\dag L_k - L^\dag_k L_k \rho_s(t) \Big),
\label{master eq}
\end{align}
where the first commutator with $H_s$ represents the unitary dynamics, the dissipation rate $\Gamma$ is a positive number, the Lindblad operators $L_k$ describe the dissipation, and $\mathcal{L}$ is the Liouvillian super-operator. Some common types of dissipation \cite{breuer2002,zeng2015} act locally on each spin via the Lindblad operators:
\begin{align}
\text{amplitude damping:}\ \  & L_k = \sqrt{\frac{1}{2}} (\sigma^x_k - i\sigma^y_k), \\
\text{phase damping:}\ \  & L_k = \sqrt{\frac{1}{2}}\sigma^z_k, \\ 
\text{phase depolarizing:}\ \  & L_k = \frac{1}{2} \sigma^x_k, \frac{1}{2} \sigma^y_k, \frac{1}{2} \sigma^z_k,
\end{align}
where $k$ denotes the $k$-th spin.

In the Heisenberg picture, the adjoint dynamical map $\mathcal{V}^\dag(t)$ acting on the Hermitian operators is defined by $ \tr[O (\mathcal{V}(t) \cdot \rho_s)] = \tr[(\mathcal{V}^\dag(t)\cdot O) \rho_s]$ for all states $\rho_s$. If the Lindblad operators do not depend on time, then the adjoint master equation describing the evolution of the operator $O_H(t) = \mathcal{V}^\dag(t) \cdot O$ is \cite{breuer2002}
\begin{align}
&\frac{d O_H(t)}{d t} = \mathcal{L}^\dag \cdot O_H(t) = i[H_s, O_H(t)] +\nonumber \\
& \sum_k \frac{\Gamma}{2}\Big( 2L_k^\dag O_H(t)L_k - O_H(t)L_k^\dag L_k - L^\dag_k L_k O_H(t) \Big).
\label{adjoint master eq}
\end{align}

Given both the dynamical and the adjoint dynamical map, how should we define the OTOC in a dissipative system? Should we just replace $A(t)$ with $\mathcal{V}^\dag(t) \cdot A$ or do something more complicated? In order to give a meaningful answer to this question, we need to specialize to a particular measurement scheme of OTOC and see how the measured quantity changes due to dissipation. We choose to focus on the measurement scheme given in Ref. \cite{Swingle:2016aa}. 

Let us analyze in more detail how the measurement scheme would be affected if dissipation is present. Without dissipation, the protocol involves the system whose unitary dynamics generated by $H_s$ is to be probed and a control qubit $c$. The system is initialized in a thermal state $\rho_s$ or eigenstate $|\psi\rangle_s$ and the control qubit is initialized in state $|+\rangle_c = \frac{1}{\sqrt{2}}(|0\rangle_c + |1\rangle_c)$. Ignoring dissipation, the measurement scheme involves the following steps of unitary operations:
\begin{align}
& (1): \quad U_1 = I_s \otimes |0\rangle\langle0|_c + B_s \otimes |1\rangle\langle1|_c,\nonumber\\
& (2): \quad U_2 = e^{-itH_s} \otimes I_c \nonumber\\
& (3): \quad U_3 = A_s \otimes I_c, \nonumber\\
& (4): \quad U_4 = e^{itH_s} \otimes I_c , \nonumber\\
& (5): \quad U_5 = B_s \otimes |0\rangle\langle0|_c +I_s \otimes  |1\rangle\langle1|_c,\nonumber
\end{align}
where $A_s$ and $B_s$ are both local unitary operators in the system. Finally, measurement of $\sigma^x_c$ is performed to get the real part of OTOC. A nice property of this protocol is that it works for both pure states and mixed states, which allows straightforward generalization to open systems. 

Note that the above protocol involves both forward and backward time evolution. With dissipation, we assume that only the Hamiltonian of the system is reversed during the backward time evolution while the effect of the environment is unchanged. That is, if forward time evolution is governed by $H_f= H_s+H_e+H_{int}$, then backward time evolution is governed by $H_b= -H_s+H_e+H_{int}$. Correspondingly, the backward dynamical map $\mathcal{V}_b$ and adjoint dynamical map $\mathcal{V}^{\dagger}_b$ differ from the forward ones $\mathcal{V}_f=\mathcal{V}$, $\mathcal{V}^{\dagger}_f=\mathcal{V}^{\dagger}$ by a minus sign in front of $H_s$.

In the presence of dissipation, the full protocol now proceeds as follows. Initially the system is prepared with density matrix $\rho_s(0)$. In addition, a control qubit $c$ is initialized in the state $|+\rangle_c=\frac{1}{\sqrt{2}}(|0\rangle_c+|1\rangle_c)$. The total initial state is $\rho_{\text{init}}=\rho_s(0) \otimes |+\rangle\langle+|_c$. The final state is $\rho_{f}$  after sequentially applying the following super-operators 
\begin{align}
& (1): \quad \mathcal{S}_1 = \mathcal{C} (I_s \otimes |0\rangle\langle0|_c + B_s \otimes |1\rangle\langle1|_c),\nonumber\\
& (2): \quad \mathcal{S}_2 = \mathcal{V}_f(t)\otimes \mathcal{I}_c, \nonumber\\
& (3): \quad \mathcal{S}_3 = \mathcal{C}(A_s\otimes I_c), \nonumber\\
& (4): \quad \mathcal{S}_4 = \mathcal{V}_b(t) \otimes \mathcal{I}_c, \nonumber\\
& (5): \quad \mathcal{S}_5 = \mathcal{C}(B_s\otimes |0\rangle\langle0|_c +I_s \otimes |1\rangle\langle1|_c),\nonumber\\
& \rho_f = \mathcal{S}_5 \cdot \mathcal{S}_4 \cdot \mathcal{S}_3 \cdot \mathcal{S}_2 \cdot \mathcal{S}_1 \cdot \rho_{init},
\end{align}
where $\mathcal{I}$ is the identity super-operator, and the conjugation super-operator is defined by $\mathcal{C}(U)\cdot \rho = U\rho U^\dag$. Finally we perform the measurement $\sigma^x_c$ to get the real part of OTOC
\begin{align}
&F(t, A, B) :=\tr(\sigma^x_c \rho_f)\nonumber \\
&=\Re\,\tr\bigg( \Big(\mathcal{V}^\dag_b(t) \cdot B_s^\dag\Big) A_s \Big(\,\mathcal{V}_f(t) \cdot \big(B_s\rho_s(0)\big) \,\Big) A_s^\dag \bigg),
\end{align}
In this paper, we focus on the case where the initial state of the system is prepared in the equilibrium state at infinite temperature, i.e. $\rho_s(0) = I_s/2^N$ and the unitary operators $A_s$ and $B_s$ are selected as local Pauli operators, for example, $B_s=\sigma^z_1, A_s= \sigma^z_i$. 

\section{Dissipative OTOC corrected for overall decay}\label{sec3}
In this section, we observe that the information light cone disappears due to the fast overall decay of OTOC in dissipative systems. In order to recover the light cone as much as possible, we propose a corrected OTOC to remove the effect of overall decay due to the information leaking in dissipative systems.

In a quantum system without dissipation, the OTOC $F(t,A,B)=\Re\langle A^\dag B_b^\dag(t) A B_b(t)\rangle_{\beta=0}$ has the same capability to reveal the light cones with different time scaling as the operator norm of the commutator $[B_b^\dag(t),A^\dag]$ in the Lieb-Robinson bound \cite{Roberts2015, Roberts:2016aa, huang2016}, where $B^\dag_b(t)$ is the operator $e^{itH_b}B^\dag e^{-itH_b}=e^{-itH_s}B^\dag e^{itH_s}$ in the Heisenberg picture.  When $t < d_{BA}/v_B$, the support of $B^\dag_b(t)$ and $A^\dag$ are approximately disjoint, so $F(t, A, B)$ is almost equal to 1, where $d_{BA}$ is the distance between the local operators $A$ and $B$ and $v_B$ is the butterfly velocity. The OTOC begins to decay \cite{Roberts2015,Roberts:2014ab,Roberts:2016aa,Maldacena2016,Hosur2016} when the support of $B_b^\dag(t)$ grows to $A^\dag$. Furthermore, in chaotic systems, OTOC decays to zero at late time in the thermodynamic limit \cite{Roberts:2014ab,Hosur2016,roberts2016chaos,huang2017}. As shown in the upper left panel of FIG. (\ref{otoczz1}), the OTOC $F(t,A,B)$ is able to reveal the ballistic light cone of information scrambling.

\begin{figure}[h]
\includegraphics[width=1.0\linewidth]{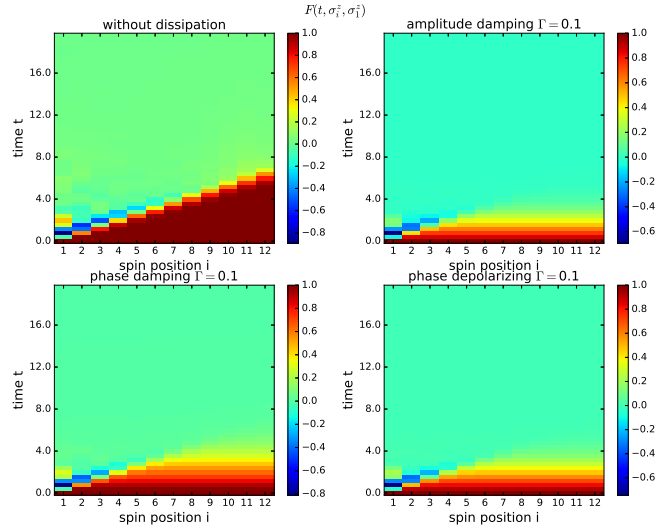}
\caption{OTOC in the chaotic Ising chain (\ref{ising}).}
\label{otoczz1}
\end{figure}

In the presence of dissipation, information is leaking into the environment while being scrambled. Thus $\mathcal{V}^\dag_b(t) \cdot B^\dag$ and the OTOC begins to decay when $t > 0$. Intuitively, dissipation destroys the light cone revealed by the OTOC $F(t,A,B)$ because the OTOC to decay to zero in a short time which is independent of the spatial distance between local operators $A$ and $B$. In FIG. (\ref{otoczz1}), our numerical calculations confirm that the light cone is destroyed. The OTOC $F(t, \sigma^z_i, \sigma^z_1)$ decays to zero for all $i$ approximately when $t>4$. 

In dissipative systems, there are two factors leading to the decay of $F(t, A, B)$: (i) the decay of $\mathcal{V}^\dag_b(t) \cdot B^\dag$ related to the information leaking caused by dissipation, (ii) the non-commutativity between $\mathcal{V}^\dag_b(t) \cdot B^\dag$ and $A^\dag$. Information scrambling is manifested only in (ii) but it might be overshadowed by (i). Is it possible to remove the effect of information leaking and recover the destroyed light cone? One natural idea is to divide the OTOC  $F(t, A, B)$ by a factor representing the decay related to information leaking. The identity operator $I$ commutes with arbitrary operator, and therefore $F(t,I,B)$ is a factor representing the overall decay of quantum information due to leaking only. Therefore, we propose a corrected OTOC to detect the light cone
\begin{align}
\frac{F(t,A, B)}{F(t,I,B)}. 
\label{corrected_oto}
\end{align}

\begin{figure}[h]
\includegraphics[width=1.0\linewidth]{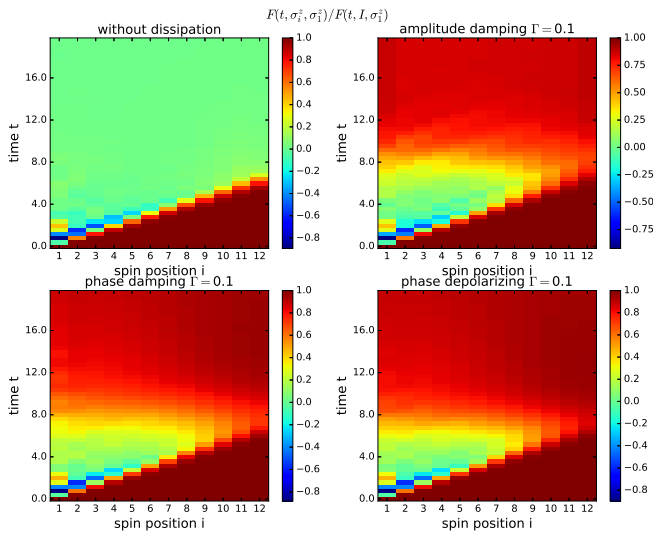}
\caption{Corrected OTOC in the chaotic Ising chain (\ref{ising}).}
\label{otoczz2}
\end{figure}

The numerical results in FIG. (\ref{otoczz2}) show that the corrected OTOC is able to recover the information light cone to some extent in small systems $(N=12)$, with either the dissipation of amplitude damping, phase damping or phase depolarizing.

For small dissipation rate, does the corrected OTOC have the capability to recover the destroyed light cone in the thermodynamic limit? The answer is no. Due to the limited computational resources, we simulate a relatively large system with $24$ spins. FIG. (\ref{light_cone}) shows that the boundary of the light cone revealed by the corrected OTOC gradually disappears in space. Based on this result, we expect that the corrected OTOC only has a finite extent in the thermodynamic limit.

Here let us briefly talk about the numerical methods we used. When $N = 12$, quantum toolbox in Python \cite{qutip1, qutip2} is used to numerically solve the master and adjoint master differential equations (Eqs. (\ref{master eq})(\ref{adjoint master eq})). When $N = 24$, our numerical simulations are based on the time-evolving block decimation (TEBD) algorithm after mapping matrix product operators to matrix product states \cite{vidal2004efficient,zwolak2004mixed,schollwock2011}, which is able to efficiently simulate the evolution of operators or mixed states. In the singular value decomposition, we ignore the singular values $s_k$ if $s_k/s_1<10^{-8}$, where $s_1$ is the maximal one. And the bond dimension is enforced as $\chi\leq 500$. Due to the presence of dissipation, the entanglement growth in the matrix product operator is bounded. Therefore, the OTOC can be efficiently calculated using the TEBD algorithm.

\begin{figure}[h]
\includegraphics[width=0.95\linewidth]{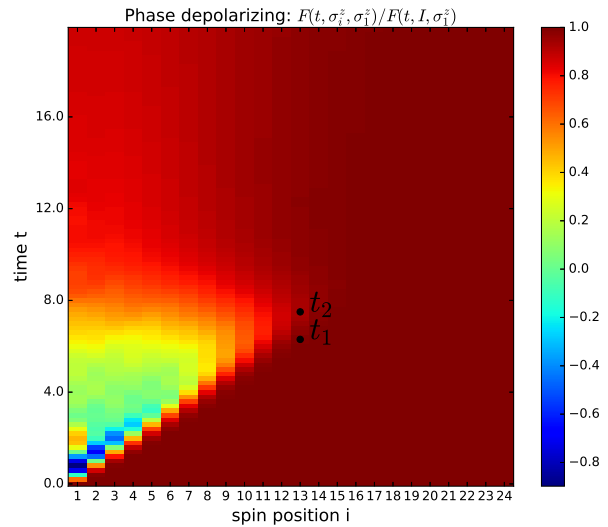}
\caption{Corrected OTOC recovers some part of the original light cone when $N=24$ and $\Gamma=0.1$.}
\label{light_cone}
\end{figure}

\section{The width of the partially recovered light cone}\label{sec4}
The finite extent of the light cone revealed by the corrected OTOC indicates that, besides the overall decay of quantum information, dissipation also leads to structural changes in the scrambled information. In this section, we are going to give a qualitative argument as to why and how the structural change happens.

In particular, we find that the re-structuring happens at late time in two aspects: (i) few-body terms dominate when compared with many-body terms (ii) at fixed time, the weight of few-body terms decays in space. 

Let us define the few-body and many-body terms, and their weights. Consider the operator $B^\dag_b(t) = \mathcal{V}^\dag_b(t) \cdot B^\dag$ which can be written in the basis of products of Pauli matrices as
\begin{align}
B^\dag_b(t) = \sum_S b_S(t) S = \sum_{i_1i_2\cdots i_N} b_{i_1i_2\cdots}(t) \sigma^{i_1}_1\sigma^{i_2}_2\cdots\sigma^{i_N}_N,
\end{align}
where the Pauli string $S$ is a product of Pauli matrices $\sigma^{i_1}_1\sigma^{i_2}_2\cdots\sigma^{i_N}_N$ with $i_k=0,x,y,$ or $z$. In the above decomposition, a few-body (many-body) term is a Pauli string with few (many) non-trivial Pauli matrices. $|b_S(t)|^2 /\sum_{S'} |b_{S'}(t)|^2$ represents the weight of Pauli string $S$.

Our qualitative arguments are mainly based on the Suzuki-Trotter expansion of the adjoint propagator in the infinitesimal time steps
\begin{equation} \label{expansion}
B^\dag_b(t+\tau)=\mathcal{V}^\dag_b(\tau) \cdot B^\dag_b(t) \approx e^{\mathcal{L}^\dag_D\tau}\cdot(B^\dag_b(t) - i\tau[H_s, B^\dag_b(t)]),
\end{equation}
where $\mathcal{L}^\dag_D$ is the adjoint super-operator of the dissipation and $\tau$ is the infinitesimal time interval. Based on this expression, we are able to qualitatively discuss the operator spreading in the space of operators during the time evolution.

The nearest-neighbor interactions in $H_s$ lead to operator growth in space. If there is no dissipation,  every term inside the light cone is expected to have approximately equal weight at late time \cite{huang2017}, so $F(t,A,B)$ is approximately equal to 0 inside the light cone.

Intuitively dissipation leads to operator decay. Many-body terms decay at a higher rate than few-body terms, so few-body terms dominate at late time in dissipative systems. In the channel of phase depolarization, $\mathcal{L}^\dag_D\cdot \sigma^{i_k}=e^{-\Gamma\tau} \sigma^{i_k} (i_k=x,y,z)$. In one step of evolution, the decaying factors of one-body, two-body and $m$-body terms are respectively $e^{-\Gamma \tau}$, $e^{-2\Gamma \tau}$ and $e^{-m\Gamma \tau}$. Many-body terms decay faster than few-body terms. Amplitude and phase damping channels have similar behaviors. In the dominating few-body terms, firstly we need to consider one-body terms. Secondly, the nearest-neighbor two-body terms cannot be ignored because the nearest-neighbor interactions in $H_s$ (Eq. (\ref{expansion})) transform one-body operators into nearest-neighbor two-body operators. Our simulations support these qualitative arguments. FIG. (\ref{weights}) shows that the sum of the weights of one-body and nearest-neighbor two-body terms approximately exceeds 90\% at late time in the dissipative channels. 

\begin{figure}[h]
\includegraphics[width=0.9\linewidth]{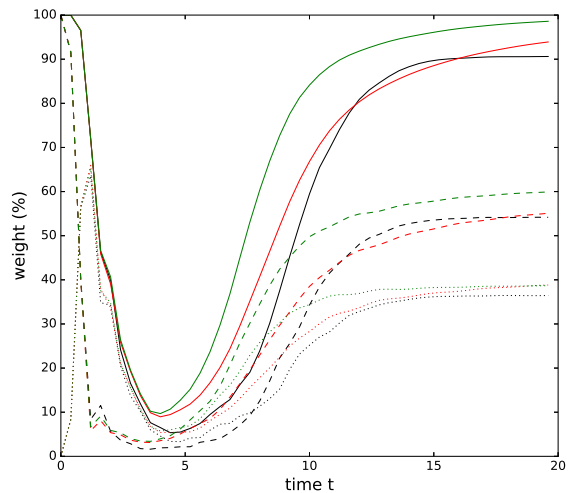}
\caption{Color plot of the weights of few-body terms in the chaotic Ising chain (\ref{ising}) with $N=12$ spins and dissipation rate $\Gamma=0.1$. Dashed (dotted) line denotes the total weight of one-body (nearest-neighbor two-body) terms in the operator $\mathcal{V}^\dag_b(t) \cdot\sigma^z_1$, while the solid line is the sum of dotted and dashed lines. Black, red, and green lines are the results for dissipative channels of amplitude damping, phase damping and phase depolarizing respectively.}
\label{weights}
\end{figure}

Moreover, because of dissipation, the weight of few-body terms decays in space at the same time. In the time-evolving operator $\mathcal{V}^\dag_b(t) \cdot\sigma^z_1$, few-body terms to the right are sequentially generated from the ones to the left. For example, one-body term $\sigma^{i_{k+1}}_{k+1}$ is generated via the path $\sigma^{i_k}_k \rightarrow \sigma^{i'_{k}}_k\sigma^{i'_{k+1}}_{k+1}\rightarrow \sigma^{i_{k+1}}_{k+1}$, where $i_k, i'_k, i'_{k+1}, i_{k+1}$ are non-trivial indicies $x, y$ or $z$. Considering the generating paths and the different decaying rate of few-body terms, we find that extra spatial decaying factor exists when comparing the coefficients of $\sigma^{i_{k+1}}_{k+1}$ and $\sigma^{i_k}_k$. Spacial decaying factors accumulate during the scrambling of information, so the weight of few-body terms decays in space at the same time. Thus $(1 - F(t,\sigma^z_k,\sigma^z_1)/F(t,I,\sigma^z_1))$, which is proportional to the coefficient $b_S(t)$ of few-body terms $S$ near site $k$, gradually vanishes in space. FIG. (\ref{light_cone}) confirms this point.

Besides the qualitative discussions, we are going to quantitatively study the relationship between the width $d(\Gamma)$ of the partially recovered light cone and the dissipation rate $\Gamma$. Appendix A provides a lower bound $\sqrt{\epsilon a v_{LR}/\Gamma}$, where $a$ is the distance between two nearest neighbor sites, $v_{LR}$ is the Lieb-Robinson velocity and $\epsilon$ is a small number. This inequality is shown to be satisfied for the width of the light cone revealed by the corrected OTOC in the channel of phase damping or phase depolarizing. In general, we expect that $d(\Gamma)$ obeys a power law $c/\Gamma^\alpha$ when the dissipation rate $\Gamma$ is sufficiently small.

Now we discuss how to find the width $d(\Gamma)$ of the partially recovered light cone in the numerical calculations. Our criterion is that if the difference of corrected OTOCs at $(t_1=(d_{BA}-w/2)/v_B, d_{BA})$ and $(t_2=(d_{BA}+w/2)/v_B,d_{BA})$ (see FIG. \ref{light_cone}) is less than a threshhold value $\delta$, for example $0.1$, then it is impossible to recognize the boundary of the light cone and we identify the smallest such $d_{BA}$ as the width of the recovered light cone. Here $w$ is the width of the boundary of the light cone in the system without dissipation and $v_B$ is the corresponding butterfly velocity.

\begin{figure}[h]
\includegraphics[width=0.90\linewidth]{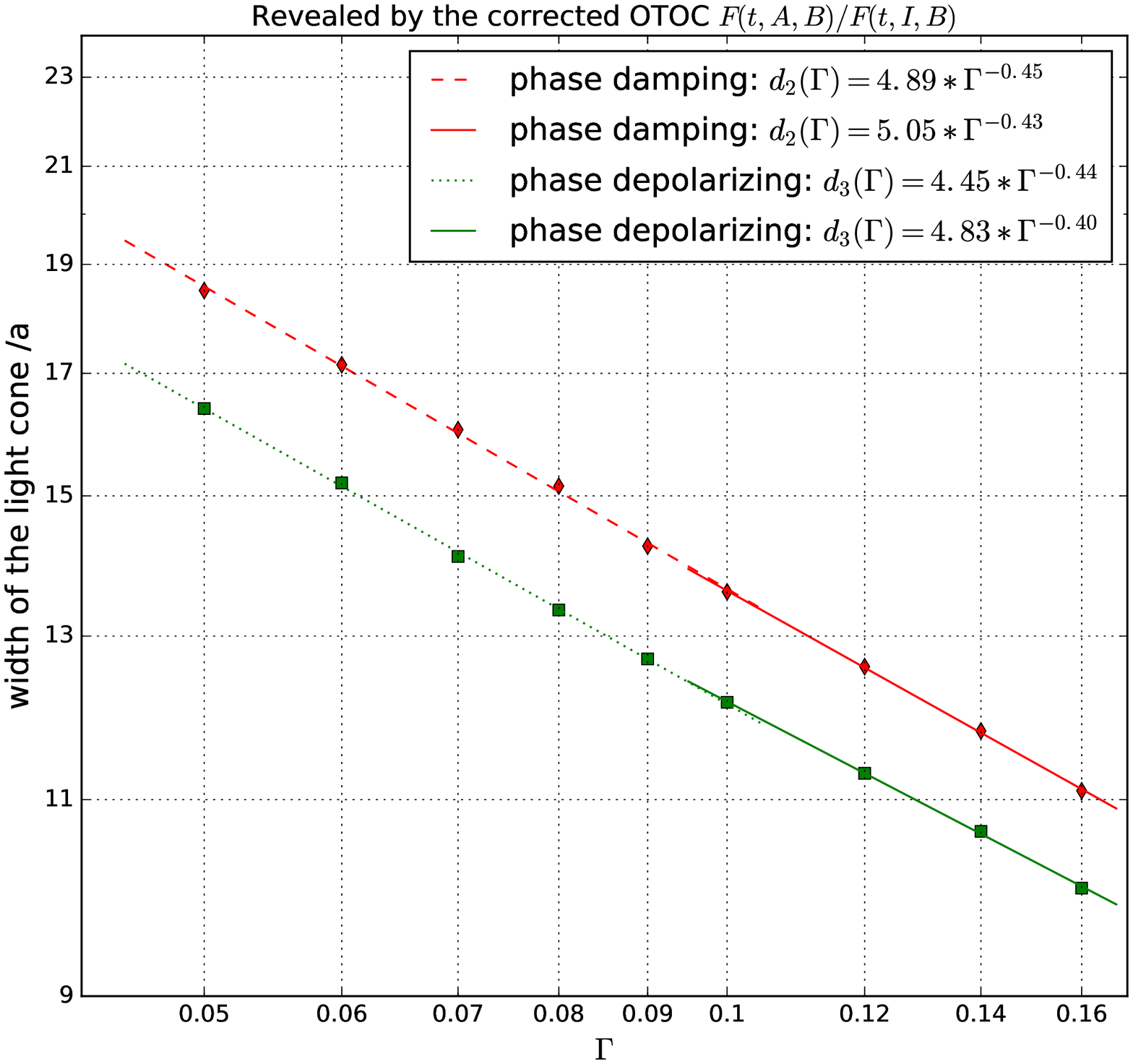}
\caption{The log-log plot of $d(\Gamma)$ and $\Gamma$.}
\label{power_decay}
\end{figure}

Our numerical simulation supports that $d(\Gamma)$ obeys a power law $c/\Gamma^\alpha$. In FIG. (\ref{power_decay}), our fitting results are: $\alpha_2\approx0.45, \alpha_3\approx0.44$ when $0.05\leq \Gamma\leq 0.1$, and $\alpha_2\approx0.43, \alpha_3\approx0.40$ when $0.1\leq \Gamma\leq 0.16$, where the subscripts $2,3$ represent the channel of phase damping and phase depolarizing respectively. If $\Gamma$ is sufficiently small, the power-law value $c /\Gamma^\alpha$ is expected to be greater than or equal to the lower bound $\sqrt{\epsilon a v_{LR}/\Gamma}$. This implies that $\alpha$ should be greater than or equal to $0.5$. Here in our simulation, $\alpha_2$ and $\alpha_3$ are smaller than $0.5$. The reason is that the dissipation rates in the range of $[0.05, 0.1]$ are not small enough. Theoretically, the derivations in Appendix A give the condition of sufficiently small $\Gamma$ via comparing $\sqrt{\epsilon a v_{LR}/\Gamma}$ with $\xi$. $\Gamma$ is sufficiently small if it is approximately less than $\epsilon a v_{LR}/(9\xi^2)$. In this chaotic Ising model, after  selecting $\epsilon \sim 0.1$, and estimating the parameters $v_{LR} \sim 1.7 a, \xi \sim a $, then we obtain that $\Gamma \lesssim  0.01 $ is sufficiently small. Therefore, our numerical result does not contradict the lower bound proved in Appendix A. Numerically, we see that $\alpha$ decreases when the range of $\Gamma$ increases. 

Even though amplitude damping has different properties when compared with phase damping and phase depolarizing, we numerically verify that $d(\Gamma)$ still scales a power law of the dissipation rate $\Gamma$. In the channel of amplitude damping, the corrected OTOC depends on $\mathcal{V}^\dag_b(t)$ and $\mathcal{V}_f(t)$ which have different properties. The identity is a fixed point of  $\mathcal{V}^\dag_b(t)$ while $\mathcal{V}_f(t)$ is trace-preserving. The proof in Appendix A does not apply to amplitude damping, thus the lower bound $\sqrt{\frac{\epsilon av_{LR}}{\Gamma}}$ does not work for the corrected OTOC in this channel. In the numerical simulation, we confirm that the general expectation of power-law decay is still correct. FIG. (\ref{power_decay1}) shows that $d(\Gamma)$ scales as a power law of $\Gamma$ with the power $\alpha_1\approx0.31$ when $0.05\leq \Gamma\leq 0.1$, where the subscript $1$ represents the channel of amplitude damping.

\begin{figure}[h]
\includegraphics[width=0.90\linewidth]{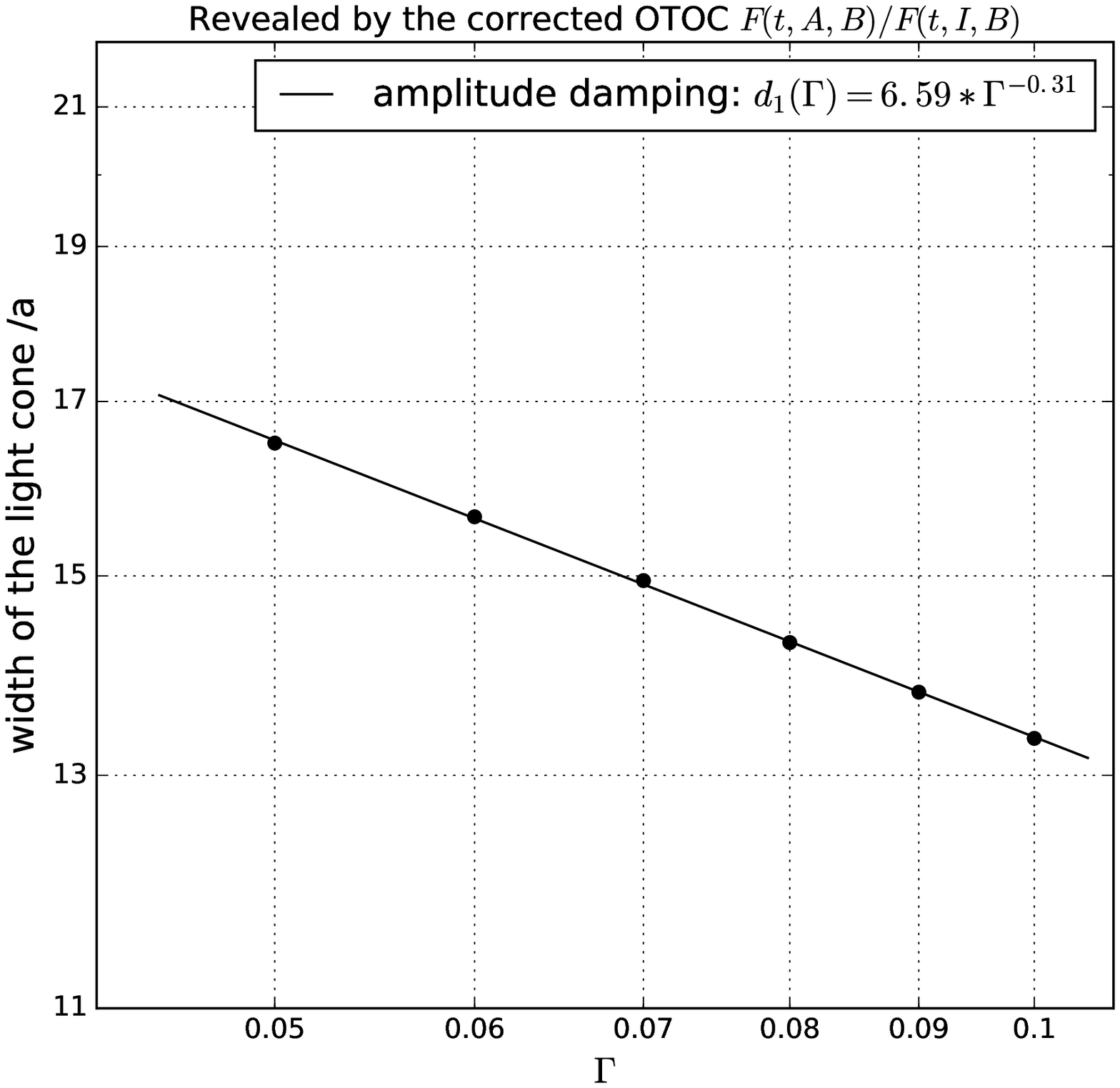}
\caption{The log-log plot of $d(\Gamma)$ and $\Gamma$.}
\label{power_decay1}
\end{figure}

\section{Lieb-Robinson bound in dissipative systems}\label{sec5}
Now we would like to discuss the Lieb-Robinson bound and its connections with OTOC in open quantum systems. Based on the observation of corrected OTOC, we conjecture a tighter Lieb-Robinson bound for dissipative systems.

The Lieb-Robinson inequality provides an upper bound for the speed of information propagation in quantum systems with local interactions. Let us briefly review the Lieb-Robinson bound.

Two observers, Alice and Bob, have access to the quantum system. The system is initially in the state $\rho(0)$ and its dynamics is governed by the dynamical map $\mathcal{V}_b(t)$ related to the Hamiltonian $H_b =- H_s + H_e + H_{int}$. The sender Alice has the option to perform some local actions in her region. After some time $t$, the receiver Bob performs some measurements to detect the signal. No signal is sent to Bob if Alice does nothing. In order to send a signal, Alice performs a small local unitary perturbation $U_A=e^{-i\epsilon O_A}$ in her region, which maps the state $\rho_s(0)$ to $\rho'_s(0)=U_A \rho_s(0) U_A^\dag\approx \rho_s(0)-i\epsilon [O_A, \rho_s(0)]$, where $O_A$ is a local Hermitian operator. At time $t$, Bob makes a measurement described by the local Hermitian operator $O_B$. The difference of outcomes describing the capability to detect the signal is
\begin{align}
&\left|\tr\Big(O_B \mathcal{V}_b(t) \cdot \big(\rho'_s(0)-\rho_s(0)\big)\Big)\right| \nonumber \\
& = \epsilon |\tr(\rho_s(0) [\mathcal{V}_b^\dag(t) \cdot O_B, O_A] )|\nonumber \\
&\leq\epsilon \| \,[\mathcal{V}_b^\dag(t) \cdot O_B, O_A] \,\|, 
\end{align}
where the operator norm is defined by $\|O\| = \sup_{|\psi\rangle} \| O|\psi\rangle\| / \||\psi\rangle\|$. Following the Lieb-Robinson bound in closed systems \cite{LR72,HK06,NOS06propg}, an inequality has been proved in open quantum systems \cite{poulin2010lieb,NVZ2011lieb,BK2012,descamps2013,kliesch2014lieb}
\begin{align}
\|\, [\mathcal{V}^\dag_b(t)\cdot O_B, O_A] \,\| \leq c \, \|O_A\| \cdot \|O_B\| \; e^{-\frac{d_{BA} -v_{LR}t}{\xi}},
\label{lieb_bound}
\end{align}
where $c, \xi$ are some constants, $v_{LR}$ is the Lieb-Robinson velocity, and $d_{BA}$ is the distance between the local operators $O_A$ and $O_B$. The Lieb-Robinson velocity $v_{LR}$ is an upper bound for the speed of information propagation, so it is greater than or equal to the butterfly velocity $v_B$ at $\beta=0$ in Eq. (\ref{butterfly_velocity}) \cite{Roberts2015}. Refs. \cite{Roberts:2016aa,Hosur2016,Mezei2017,Lucas:2017aa} provide more discussions about the relationship between $v_B$ and $v_{LR}$. 

In dissipative systems, the left-hand side of Eq. (\ref{lieb_bound}) decays to zero at late time, so Eq. (\ref{lieb_bound}) is not tight enough. One reason is that the operator $\mathcal{V}_b^\dag(t)\cdot O_B$ in the Heisenberg picture is overall decaying because of the dissipation. Ref. \cite{BK2012} has proved that the operator norm of $\mathcal{V}_b^\dag(t)\cdot O_B$ is non-increasing because of the dissipation, i.e. $ \| \mathcal{V}_b^\dag(t+\mathrm{d}t)\cdot O_B \| \leq \| \mathcal{V}_b^\dag(t)\cdot O_B \| $, where $\mathrm{d}t$ is an infinitesimal time step. This means that the non-trivial elements in the time-evolving operator are decaying during the time evolution. Our numerical simulations (FIG. \ref{liebzz1}) show that the left-hand side of Eq. (\ref{lieb_bound}) decays to zero at late time, and the boundary of the light cone gradually disappears when the distance $d_{BA}$ increases.

\begin{figure}[h]
\includegraphics[width=1.0\linewidth]{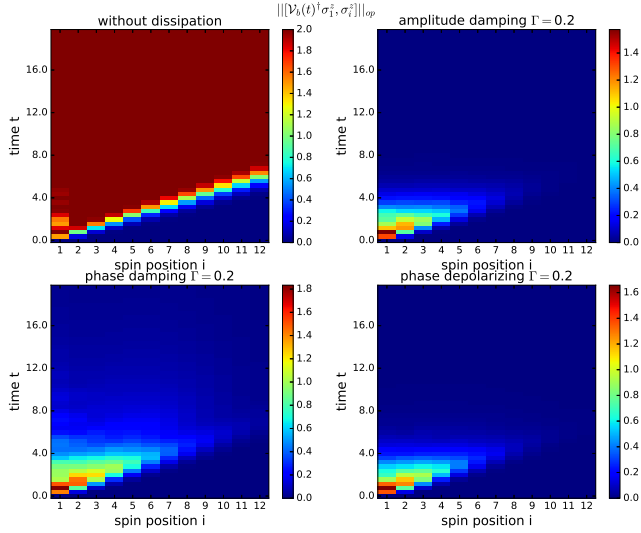}
\caption{The operator norm of the commutator in the chaotic Ising chain (\ref{ising}).}
\label{liebzz1}
\end{figure}

Inspired by the corrected OTOC, we conjecture a tighter Lieb-Robinson bound in dissipative systems
\begin{align}
\quad \frac{\|\, [\mathcal{V}^\dag_b(t)\cdot O_B, O_A] \,\|} { \|O_A\| \cdot \|\mathcal{V}^\dag_b(t)\cdot  O_B\|}  \leq c \, e^{-\frac{d_{AB} -v_{LR}t}{\xi}}.
\label{lieb_bound2}
\end{align}

The above tighter bound has deep connections with the corrected OTOC. In the channel of phase damping or phase depolarizing, the adjoint dynamical map $\mathcal{V}^\dag_b(t)$ is exactly equal to $\mathcal{V}_f(t)$, then $2\Big(1 - \frac{F(t,A, B)}{F(t,I,B)}\Big)=\frac{\|\, [\mathcal{V}^\dag_b(t)\cdot O_B, O_A] \,\|^2_{F}} { \|\mathcal{V}^\dag_b(t)\cdot  O_B\|^2_{F}}$, where $\|O\|_{F}=\sqrt{\tr(OO^\dag)/2^N}$ is the normalized Frobenius norm of the operator $O$. We expect that the normalized Frobenius and operator norm exhibit similar behaviors during the time evolution. Based on this expectation, Eq. (\ref{lieb_bound2}) is conjectured in dissipative systems via changing the normalized Frobenius norm to the operator norm. Similar to the corrected OTOC, the left-hand side of the above modified version of Lieb-Robinson bound is able to partially recover the destroyed light cone in the chaotic Ising chain with dissipation (see FIG. \ref{liebzz2}).

\begin{figure}[h]
\includegraphics[width=1.0\linewidth]{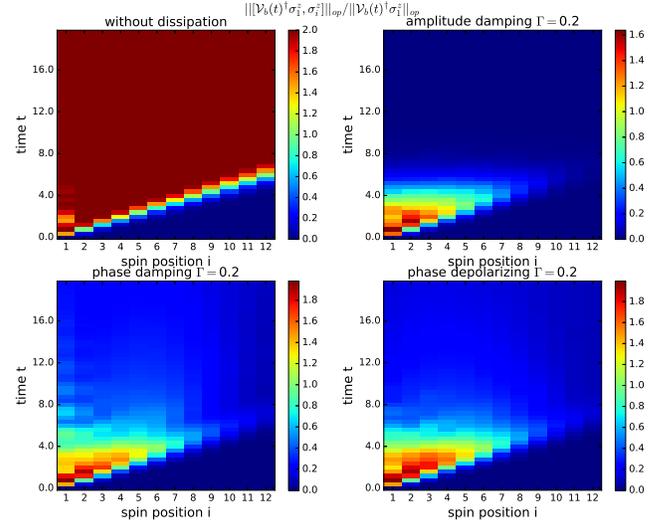}
\caption{The corrected operator norm of the commutator in the chaotic Ising chain (\ref{ising}).}
\label{liebzz2}
\end{figure}

In the above tighter Lieb-Robinson bound, the correcting factor $1 / \|\mathcal{V}^\dag_b(t)\cdot  O_B\|$ has different behaviors in different dissipative channels. $\|\mathcal{V}^\dag_b(t)\cdot  O_B\|$ decays to zero in the channel of phase damping or phase depolarizing but converges to a positive constant in the channel of amplitude damping (see FIG. \ref{opnorm_decay}). In the channel of amplitude damping, the adjoint dynamical map $\mathcal{V}^\dag_b(t)$ does not preserve the trace of an operator, the identity operator $I$ appears in the decomposition of $\mathcal{V}^\dag_b(t)\cdot O_B$ in terms of Pauli operators when $O_B$ is traceless. Therefore, the operator norm of $\mathcal{V}^\dag_b(t)\cdot O_B$ converges to a constant. This can also be observed in the upper right panel of FIG. (\ref{liebzz2}) which is distinct from the lower ones. The operator norm of the commutator is decaying to zero while the denominator converges to a positive constant when $t > 7$. In the channel of amplitude damping, the correcting factor $1/\|\mathcal{V}^\dag_b(t)\cdot  O_B\|$ does not play an essential role to remove the effect of overall decay due to the information leaking.

\begin{figure}
\includegraphics[width=0.9\linewidth]{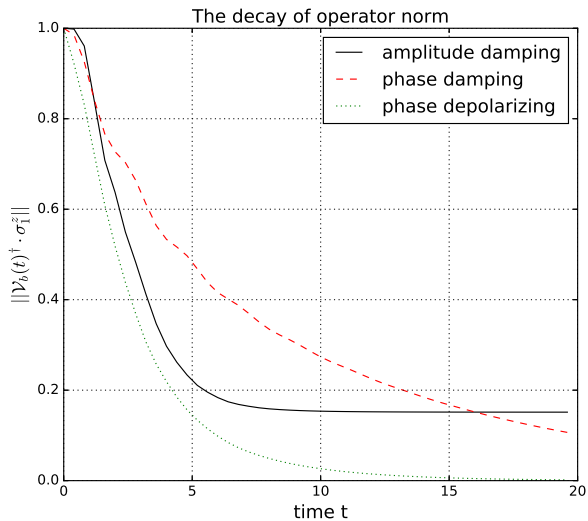}
\caption{The decay of operator norm $\|\mathcal{V}^\dag_b(t)\cdot\sigma^z_1\|$ in different channels ($N=12, \Gamma=0.2$).}
\label{opnorm_decay}
\end{figure}

\section{Conclusions and discussions}
In this paper, we study the effect of dissipation on information scrambling in open chaotic systems. By numerically calculating the measured OTOC signal in a chaotic spin chain in the presence of common types of dissipation, we find that dissipation leads to the decay of the signal not only due to information leaking, but also information re-structuring. We define a corrected OTOC to remove the effect of leaking and partially recover the information light cone. However, due to the re-structuring, the recovered light cone only persists to a finite distance. Based on this understanding of how dissipation affects information scrambling, we conjecture a tighter version of the Lieb-Robinson bound in open systems, which we support with numerical simulation. 

Given the observation we made in this paper, several open questions would be interesting to explore in future work. First, we qualitatively discussed the information re-structuring during scrambling. A more accurate estimation of the size of the light cone may be obtained by carefully modeling the dynamics as dissipative quantum walks. Secondly, although we were able to partially recover the light cone numerically, this is not practical experimentally, as the normalization factor we divide out in Eq. (\ref{corrected_oto}) decays exponentially in time and quickly becomes too small to be accessible experimentally. Is there a better way to see information scrambling in the presence of dissipation? Are there quantities which are also sensitive to information scrambling as OTOC but more robust to the effect of dissipation? This is an important question to be addressed in future work. Finally, we conjectured the modified version of open system Lieb-Robinson bound based on numerical observation. It would be nice to see if this bound can be analytically proved.

\begin{acknowledgments}
When we were finishing this manuscript, we learned of the work by Swingle and Yunger Halpern \cite{Swingle:2018aa} which also studies the problem of extracting OTOCs' early-time dynamics in the presence of error and decoherence.

Y.-L.Z., Y.H., and X.C. are supported by National Science Foundation under award number DMR-1654340 and the Alfred P. Sloan research fellowship. We acknowledge funding provided by the Institute for Quantum Information and Matter, an NSF Physics Frontiers Center (NSF Grant PHY-1125565) with support of the Gordon and Betty Moore Foundation (GBMF-2644). X.C. is also supported by the Walter Burke Institute for Theoretical Physics at Caltech.
\end{acknowledgments}

\appendix
\section*{Appendix A: Proof of a lower bound \label{appendix1: A}}
Here we prove a lower bound $\sqrt{\epsilon a v_{LR}/\Gamma}$ for the width $d(\Gamma)$ of the partially recovered light cone revealed by the corrected OTOC in the channel of phase damping or phase depolarizing. The main ideas in the proof are comparing the difference between the adjoint propagator in the dissipative channel and the unitary one without dissipation, and employing the adjoint propagator of spatially truncated adjoint Liouvillians.

\begin{lemma} \label{lemma1}
Suppose $\mathcal{L}^\dag_1(t)$ and $\mathcal{L}^\dag_0(t)$ are the adjoint Liouvillian super-operators describing Markovian dynamics of the same open quantum system with $\|\mathcal{L}^\dag_1(t) - \mathcal{L}^\dag_0(t) \|\leq f(t)$, then the difference of adjoint propagators satisfies $\|\mathcal{V}^\dag_1(t,0) -\mathcal{V}^\dag_0(t,0)\|\leq \int_0^t d\tau f(\tau)$,  where $\mathcal{V}^\dag_k(t,s) = \mathcal{T}_{\rightarrow} e^{\int_s^t \mathcal{L}^\dag_k(\tau)d\tau} (t \geq s, k=0,1)$, and $\mathcal{T}_{\rightarrow}$ or $\mathcal{T}_{\leftarrow}$ is the time-ordering operator which orders products of time-dependent operators such that their time arguments increase in the direction indicated by the arrow.
\end{lemma}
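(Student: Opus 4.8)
The plan is to prove the estimate by a standard Duhamel (variation-of-constants) interpolation between the two propagators, closed off using the fact that every Markovian adjoint propagator is a contraction in operator norm.

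First I would introduce the interpolating superoperator family $G(\tau):=\mathcal{V}^\dag_1(t,\tau)\,\mathcal{V}^\dag_0(\tau,0)$ for $\tau\in[0,t]$. Since $\mathcal{V}^\dag_k(s,s)=\mathcal{I}$, this family connects exactly the two objects we want to compare: $G(0)=\mathcal{V}^\dag_1(t,0)$ and $G(t)=\mathcal{V}^\dag_0(t,0)$. Differentiating $G$ in $\tau$ and using the evolution equations that $\mathcal{V}^\dag_k$ satisfies in each of its two time arguments --- namely $\partial_\tau\mathcal{V}^\dag_0(\tau,0)=\mathcal{L}^\dag_0(\tau)\,\mathcal{V}^\dag_0(\tau,0)$ (generator on the left, upper limit varied) and $\partial_\tau\mathcal{V}^\dag_1(t,\tau)=-\,\mathcal{V}^\dag_1(t,\tau)\,\mathcal{L}^\dag_1(\tau)$ (generator on the right with a sign, lower limit varied), the signs and orderings being fixed by the definition of $\mathcal{V}^\dag_k$ as a $\mathcal{T}_{\rightarrow}$-ordered exponential --- the two cross terms telescope to
\begin{equation}
\frac{dG(\tau)}{d\tau}=-\,\mathcal{V}^\dag_1(t,\tau)\big(\mathcal{L}^\dag_1(\tau)-\mathcal{L}^\dag_0(\tau)\big)\mathcal{V}^\dag_0(\tau,0).
\end{equation}
Integrating from $0$ to $t$ gives the exact Duhamel identity
\begin{equation}
\mathcal{V}^\dag_1(t,0)-\mathcal{V}^\dag_0(t,0)=\int_0^t d\tau\;\mathcal{V}^\dag_1(t,\tau)\big(\mathcal{L}^\dag_1(\tau)-\mathcal{L}^\dag_0(\tau)\big)\mathcal{V}^\dag_0(\tau,0).
\end{equation}

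Next I would take the norm $\|\cdot\|$ of both sides, bring it inside the integral by the triangle inequality, and use submultiplicativity to split the integrand as $\|\mathcal{V}^\dag_1(t,\tau)\|\cdot\|\mathcal{L}^\dag_1(\tau)-\mathcal{L}^\dag_0(\tau)\|\cdot\|\mathcal{V}^\dag_0(\tau,0)\|$. The middle factor is at most $f(\tau)$ by hypothesis. The outer two factors are each at most $1$ because an adjoint Markovian propagator is completely positive and unital --- one checks $\mathcal{L}^\dag_k(\tau)\cdot I=0$ straight from the Lindblad form, whence $\mathcal{V}^\dag_k(t,s)\cdot I=I$, while complete positivity is inherited from the Schr\"{o}dinger-picture CPTP evolution --- and a completely positive unital map is a contraction in operator norm, $\|\mathcal{V}^\dag_k(t,s)\|\leq 1$; this is precisely the norm monotonicity already quoted in the main text from Ref.~\cite{BK2012}. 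Substituting gives $\|\mathcal{V}^\dag_1(t,0)-\mathcal{V}^\dag_0(t,0)\|\leq\int_0^t d\tau\,f(\tau)$, as claimed.

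No step here is deep, so the ``hard part'' is really only a matter of care. The main obstacle is bookkeeping with the time-ordering: the adjoint propagators obey a \emph{reversed} composition law $\mathcal{V}^\dag_k(t,r)=\mathcal{V}^\dag_k(t,s)\,\mathcal{V}^\dag_k(s,r)$, and the generator enters from the left (no sign) when the upper time limit is varied but from the right (with a minus sign) when the lower limit is varied, so it is easy to produce a wrong sign in the telescoping step unless one pins down the convention from the stated definition. A secondary, routine point is to make the contraction bound $\|\mathcal{V}^\dag_k(t,s)\|\leq1$ fully rigorous for time-dependent Lindbladians; absent a citation, it follows by discretizing the $\mathcal{T}_{\rightarrow}$-exponential into infinitesimal adjoint-CPTP steps, each norm non-increasing, and passing to the continuum limit.
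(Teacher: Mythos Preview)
Your strategy is exactly the paper's: a Duhamel interpolation between the two adjoint propagators, followed by the contraction bound $\|\mathcal{V}^\dag_k\|\le 1$. The only issue is that you have tripped over precisely the bookkeeping pitfall you flagged. With the $\mathcal{T}_{\rightarrow}$ convention stated in the lemma (time arguments increase to the \emph{right}), the derivatives are
\[
\partial_\tau\mathcal{V}^\dag_0(\tau,0)=\mathcal{V}^\dag_0(\tau,0)\,\mathcal{L}^\dag_0(\tau),\qquad
\partial_\tau\mathcal{V}^\dag_1(t,\tau)=-\,\mathcal{L}^\dag_1(\tau)\,\mathcal{V}^\dag_1(t,\tau),
\]
i.e.\ the opposite placement of the generator from what you wrote, and the composition law reads $\mathcal{V}^\dag_k(t,r)=\mathcal{V}^\dag_k(s,r)\,\mathcal{V}^\dag_k(t,s)$. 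With these, your choice $G(\tau)=\mathcal{V}^\dag_1(t,\tau)\,\mathcal{V}^\dag_0(\tau,0)$ does \emph{not} telescope: one gets $\partial_\tau G=-\mathcal{L}^\dag_1(\tau)G(\tau)+G(\tau)\mathcal{L}^\dag_0(\tau)$, with the generators on the outside rather than sandwiched, and the argument stalls.

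The fix is simply to reverse the order of the two factors. The paper interpolates with $\mathcal{V}^\dag_0(s,0)\,\mathcal{V}^\dag_1(t,s)$, which under the $\mathcal{T}_{\rightarrow}$ convention differentiates cleanly to $\mathcal{V}^\dag_0(s,0)\big(\mathcal{L}^\dag_0(s)-\mathcal{L}^\dag_1(s)\big)\mathcal{V}^\dag_1(t,s)$; your norm estimate then goes through verbatim. So your proof is correct in spirit and method, but the interpolating product needs to be written with the earlier-time propagator on the left.
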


\begin{proof}
\begin{align}
& \|\mathcal{V}^\dag_1(t,0) -\mathcal{V}^\dag_0(t,0)\| = \|\mathcal{V}^\dag_0(0,0) \mathcal{V}^\dag_1(t,0) - \mathcal{V}^\dag_0(t,0) \mathcal{V}^\dag_1(t,t)  \| \nonumber\\
& =\|\int_0^t ds \frac{\partial}{\partial s}\Big(\mathcal{V}^\dag_0(s,0) \mathcal{V}^\dag_1(t,s)\Big)\|  \nonumber\\
& \leq \int_0^t ds \| \mathcal{V}^\dag_0(s,0) (\mathcal{L}^\dag_0(s)-\mathcal{L}^\dag_1(s))\mathcal{V}^\dag_1(t,s)\|  \nonumber\\
& \leq \int_0^t ds \| \mathcal{V}^\dag_0(s,0)\| \cdot \|\mathcal{L}^\dag_1(s)-\mathcal{L}^\dag_0(s)\|\cdot \| \mathcal{V}^\dag_1(t,s)\|  \nonumber\\
& \leq \int_0^t ds \|\mathcal{L}^\dag_1(s)-\mathcal{L}^\dag_0(s)\|\leq \int_0^t ds f(s)\nonumber
\end{align} 
In the derivation, one uses the fact that the adjoint propagators $\mathcal{V}^\dag_k(t,s)$ are norm-nonincreasing \cite{breuer2002,poulin2010lieb,BK2012}.
\end{proof}

Here, we need to pay attentions to the difference between the propagator $\mathcal{V}(t,s) = \mathcal{T}_{\leftarrow} e^{\int_s^t \mathcal{L}(\tau)d\tau} (t \geq s)$  and its adjoint $\mathcal{V}^\dag(t,s) = \mathcal{T}_{\rightarrow} e^{\int_s^t \mathcal{L}^\dag(\tau)d\tau} (t \geq s)$. $\mathcal{V}(t,s)$ is acting on the density matrix and trace-preserving. $\mathcal{V}^\dag(t,s)$ is acting on the observables and the identity is one of its fixed points. For unitary evolution, $\mathcal{V}^\dag(t,s)$ and $\mathcal{V}(t,s)$ are the inverse of each other and both norm-preserving. When dissipation exists, only $\mathcal{V}^\dag(t,s)$ is norm-nonincreasing for arbitrary observables, i.e. $\|\mathcal{V}^\dag(t,s) \cdot O \|\leq \|O\|$ ($\forall O = O^\dag$).

\begin{lemma}\label{lemma2}
In a one-dimensional system, $\mathcal{L}^\dag_H=\sum_i \mathcal{L}^\dag_{H_i} $ is the sum of local adjoint Liouvillian super-operators, and $\mathcal{L}^\dag_D=\Gamma \sum_k \mathcal{L}^\dag_{D,k}$ is the sum of adjoint dissipative super-operators acting on each site, where $\|\mathcal{L}^\dag_{D,k}\|\leq 1$ and $\Gamma$ is the dissipation rate. During the evolution, the operator difference between the dissipative and unitary channel is bounded by $\|\mathcal{V}^\dag_1(t,0) \cdot B -\mathcal{V}^\dag_0(t,0)\cdot B\| \leq \Gamma O( t^2)$, where $\mathcal{V}^\dag_1(t,s) = \mathcal{T}_{\rightarrow} e^{\int_s^t (\mathcal{L}^\dag_H(\tau)+\mathcal{L}^\dag_D(\tau)) d\tau} (t \geq s)$, $\mathcal{V}^\dag_0(t,s) = \mathcal{T}_{\rightarrow} e^{\int_s^t \mathcal{L}^\dag_H(\tau) d\tau} (t \geq s)$, and $B$ is a local observable at site 0.
\end{lemma}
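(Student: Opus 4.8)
The plan is to relate the dissipative adjoint propagator to the purely-Hamiltonian one through the Duhamel manipulation already used in the proof of Lemma~\ref{lemma1}, and then exploit that the Heisenberg-evolved local observable $B$ occupies, up to exponentially small tails, a light cone of radius only $\sim v_{LR}(t-s)$ at the time-slice $s$, so that only $O(t)$ of the $N$ local dissipators $\mathcal{L}^\dag_{D,k}$ act nontrivially on it at each instant. Keeping the unitary factor on the outside, the Lemma~\ref{lemma1} identity becomes
\begin{align}
&\mathcal{V}^\dag_1(t,0)\cdot B-\mathcal{V}^\dag_0(t,0)\cdot B \nonumber\\
&\quad=\int_0^t\!ds\;\mathcal{V}^\dag_1(s,0)\cdot\Big(\mathcal{L}^\dag_D(s)\cdot\big(\mathcal{V}^\dag_0(t,s)\cdot B\big)\Big),
\end{align}
and because $\mathcal{V}^\dag_1(s,0)$ is norm-nonincreasing it suffices to bound, uniformly in the system size $N$, the quantity $\|\mathcal{L}^\dag_D(s)\cdot B_0(s)\|$, where $B_0(s):=\mathcal{V}^\dag_0(t,s)\cdot B$ is the unitary Heisenberg evolution of the site-$0$ observable $B$ over time $t-s\le t$.

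For that estimate I would use two properties of the single-site adjoint dissipators $\mathcal{L}^\dag_{D,k}$. First, each $\mathcal{L}^\dag_{D,k}$ acts only on site $k$ and fixes the identity, hence annihilates every operator that is trivial on site $k$; writing $\Pi_k$ for the single-qubit Pauli twirl on site $k$ (the projection onto such operators) and using $B_0(s)-\sigma^\mu_kB_0(s)\sigma^\mu_k=[B_0(s),\sigma^\mu_k]\sigma^\mu_k$, one gets $\|\mathcal{L}^\dag_{D,k}\cdot B_0(s)\|\le\|\mathcal{L}^\dag_{D,k}\|\,\|B_0(s)-\Pi_kB_0(s)\|\le\tfrac34\max_{\mu}\|[B_0(s),\sigma^\mu_k]\|$. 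Second, applying the Lieb--Robinson bound of Eq.~(\ref{lieb_bound}) to the unitary evolution $\mathcal{V}^\dag_0$ gives $\|[B_0(s),\sigma^\mu_k]\|\le c\,\|B\|\,e^{-(d_k-v_{LR}(t-s))/\xi}$, where $d_k$ is the lattice distance from site $0$ to site $k$; combined with the trivial bound $\|\mathcal{L}^\dag_{D,k}\cdot B_0(s)\|\le\|B_0(s)\|\le\|B\|$ this yields $\|\mathcal{L}^\dag_{D,k}\cdot B_0(s)\|\le\min\{1,\,c\,e^{-(d_k-v_{LR}(t-s))/\xi}\}\|B\|$. Summing $\Gamma\sum_k$ over the one-dimensional chain, the $O\!\big(v_{LR}(t-s)/a\big)$ sites inside the cone each contribute $O(\|B\|)$, while the sites outside it contribute a convergent geometric tail $\Gamma\|B\|\sum_{d_k\gtrsim v_{LR}(t-s)}c\,e^{-(d_k-v_{LR}(t-s))/\xi}=\Gamma\,O(\xi/a)\,\|B\|$ that is independent of $N$. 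Hence $\|\mathcal{L}^\dag_D(s)\cdot B_0(s)\|\le\Gamma\,\|B\|\,O\!\big((v_{LR}(t-s)+\xi)/a\big)$, and integrating over $s\in[0,t]$ yields $\|\mathcal{V}^\dag_1(t,0)\cdot B-\mathcal{V}^\dag_0(t,0)\cdot B\|\le\Gamma\,\|B\|\,O\!\big(v_{LR}t^2/a+\xi t/a\big)=\Gamma\,O(t^2)$ for local $B$ of bounded norm, with the implied constant independent of $N$.

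I expect the main obstacle to be making the site sum over the Lieb--Robinson tails fully rigorous: there are extensively many dissipator sites outside the instantaneous light cone, so one must use the exponential (rather than strictly compactly supported) form of the bound and split the sum at the crossover radius $d_k\approx v_{LR}(t-s)+\xi\ln c$ in order to see that the combined tail is an $N$-independent constant and not something growing with $N$; pinning down clean constants there is essentially the whole content of the argument. A secondary, easier point is to verify channel by channel (amplitude damping, phase damping, phase depolarizing) that each single-site adjoint dissipator annihilates operators supported away from its site --- which holds because each is the dual of a trace-preserving single-site Lindbladian and hence fixes $I$. Everything else --- the Duhamel identity and the norm-nonincreasing property of adjoint dynamical maps --- is already supplied by Lemma~\ref{lemma1} and the references it cites.
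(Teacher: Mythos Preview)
Your proof is correct and yields the same bound $\Gamma\|B\|(v_{LR}t^2+\xi t)/a$ as the paper, but via a genuinely different route. The paper proceeds by spatial truncation: it introduces truncated adjoint Liouvillians supported on a growing region $\Lambda(\tau)=(-d(\tau),d(\tau))$ with $d(\tau)\sim 2v_{LR}\tau+\xi$, invokes the quasi-locality result of Ref.~\cite{BK2012} to bound $\|B_k(t)-\tilde B_k(t)\|$ by an exponentially small error for $k=0,1$, and then applies Lemma~\ref{lemma1} to the \emph{truncated} propagators, whose Liouvillians differ in norm by at most $\Gamma(\xi/a+2v_{LR}\tau/a)$ because only finitely many dissipator sites lie inside $\Lambda(\tau)$. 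You instead keep the full propagators, swap the roles of $0$ and $1$ in the Duhamel identity so that the \emph{unitary} factor $\mathcal{V}^\dag_0(t,s)$ is the one applied to $B$, and then bound $\sum_k\|\mathcal{L}^\dag_{D,k}\cdot(\mathcal{V}^\dag_0(t,s)\cdot B)\|$ site by site using the closed-system Lieb--Robinson bound together with the observation that each single-site adjoint dissipator annihilates operators trivial on its site. Your approach is more self-contained in that it bypasses the truncated-Liouvillian machinery and the external quasi-locality black box, trading them for the explicit geometric tail sum $\sum_{d_k\gtrsim v_{LR}(t-s)}c\,e^{-(d_k-v_{LR}(t-s))/\xi}=O(\xi/a)$, which is straightforward and not the obstacle you anticipate. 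The paper's route, by contrast, packages the light-cone localization once and for all into the quasi-locality error and then only needs the crude operator-norm Lemma~\ref{lemma1} on a region of bounded size. Both strategies deliver identical constants in the leading $v_{LR}\Gamma t^2/a$ term.
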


\begin{proof}
For an open quantum system described by short-range Liouvillians, the Lieb-Robinson bound
\begin{equation}
\|\, [\mathcal{V}^\dag(t)\cdot B, A] \,\| \leq c \, \|B\| \cdot \|A\| \; e^{-(d_{AB} -v_{LR}t)/\xi}
\end{equation}
implies the existence of an upper limit to the speed of quantum information propagation. The outside signal is exponentially small with the distance from the boundary of the effective light cone. Based on the Lieb-Robinson bound, Ref. \cite{BK2012} obtained the quasi-locality of Makovian quantum dynamics: up to exponentially small error, the evolution of local observable can be approximately obtained by applying the propagator of a spatially truncated version of the adjoint Liouvillian, provided that the range of the truncated propagator is larger than the support of the time-evolving observable. The truncated propagators we select are
\begin{align}
&\tilde{\mathcal{V}}^\dag_0(t,0)=\mathcal{T}_{\rightarrow} e^{\int_0^t d\tau \sum_{i: H_i \subset \Lambda(\tau)} \mathcal{L}^\dag_{H_i} (\tau) }, \ \tilde{\mathcal{V}}^\dag_1(t,0)= \nonumber\\
& \mathcal{T}_{\rightarrow} e^{\int_0^t d\tau \big(\sum_{i: H_i \subset\Lambda(\tau)}\mathcal{L}^\dag_{H_i}(\tau) +  \sum_{k=- d(\tau)/a}^{d(\tau)/a} \mathcal{L}_{D,k}(\tau) \big)},\nonumber
\end{align}
where $d(t)\sim 2v_{LR}t + \xi$, $a$ is the distance between two nearest neighboring sites, and  $H_i \subset \Lambda(t)$ means the local term $H_i$ is located in the regime $\Lambda(t)=(-d(t),d(t))$. Let $B_k(t)=\mathcal{V}^\dag_k(t,0) \cdot B$ and $\tilde{B}_k(t)=\tilde{\mathcal{V}}^\dag_k(t,0) \cdot B $, by applying the triangle inequality, one obtains
\begin{align} \label{tri_ineq}
\|B_1(t) -B_0(t)\| \leq & \|\tilde{B}_1(t)- \tilde{B}_0(t)\| + \nonumber \\
& \|B_1(t) - \tilde{B}_1(t)\| + \|\tilde{B}_0(t)-B_0(t)\|. \nonumber
\end{align}
For the right-hand site, the first quantity is bounded by $\| B\|\int_0^t d\tau \Gamma (\xi/a+2v_{LR}\tau/a) =  \| B\| \Gamma t (v_{LR}t+\xi)/a$ (\textbf{Lemma 1}), the second and third one both are less than or equal to $c'\; \|B\|\; e^{(v_{LR}t-d(t))/\xi}= c'\; \|B\|\; e^{-1-v_{LR}t/\xi}$ \cite{BK2012}. When $t\gtrsim 3\xi/v_{LR}$, then $\|\mathcal{V}^\dag_1(t,0) \cdot B -\mathcal{V}^\dag_0(t,0)\cdot B\|\sim v_{LR}\Gamma t^2/a$. Therefore $\|\mathcal{V}^\dag_1(t,0) \cdot B -\mathcal{V}^\dag_0(t,0)\cdot B\| \leq \Gamma O(t^2)$.
\end{proof}

\begin{prop}\label{myprop1}
In the chaotic Ising chain with dissipations acting on each site, the light cone within the time range $t\leq \sqrt{\frac{\epsilon a}{v_{LR} \Gamma}}$ can be revealed by $\| [\mathcal{V}^\dag_1(t,0)\cdot B, A] \|$, $\|[\mathcal{V}^\dag_1(t,0)\cdot B,A]\|/\|\mathcal{V}^\dag_1(t,0)\cdot B\| $, $\| [\mathcal{V}^\dag_1(t,0)\cdot B, A] \|_F$ and $\|[\mathcal{V}^\dag_1(t,0)\cdot B,A]\|_F/\|\mathcal{V}^\dag_1(t,0)\cdot B\|_F $, where $v_{LR}$ is the Lieb-Robinson velocity, $a$ is the distance between two nearest neighboring sites, $\epsilon$ is a small number (for example, $\epsilon \sim 0.1$), and $\Gamma$ is the sufficiently small dissipation rate $(\ll \epsilon a v_{LR}/\xi^2)$, $\| O \|$ is the operator norm and $\|O\|_{F} =\lim_{N\rightarrow\infty}\sqrt{\tr(OO^\dag)/2^N}$ is the normalized Frobenius norm of operators in the thermodynamic limit. The width of the light cone is at least $\sqrt{\frac{\epsilon av_{LR}}{\Gamma}}$.
\end{prop}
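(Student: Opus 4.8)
The plan is to show that in the stated time window the dissipative adjoint propagator $\mathcal{V}^\dag_1(t,0)$ is only an $O(\epsilon)$ perturbation of the closed-system propagator $\mathcal{V}^\dag_0(t,0)$ acting on the local operator $B$, so that the sharp light cone produced in the closed chain by the ordinary Lieb-Robinson bound Eq.~(\ref{lieb_bound}) together with chaotic scrambling survives essentially unchanged; its spatial reach is then dictated by $v_{LR}$. Throughout, ``the light cone is revealed by $Q(t,d_{AB})$'' means $Q$ is exponentially small for $d_{AB}>v_{LR}t$ and $O(1)$ once $d_{AB}\lesssim v_{LR}t$, up to errors controlled by $\epsilon$.

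\emph{Closed system.} For $\mathcal{V}^\dag_0$ the commutator $\|[\mathcal{V}^\dag_0(t,0)\cdot B,A]\|\le c\|A\|\|B\|\,e^{-(d_{AB}-v_{LR}t)/\xi}$ is exponentially small outside the cone, while inside the cone the chaotic scrambling of the Ising chain drives the OTOC to its floor, so $\|[\mathcal{V}^\dag_0(t,0)\cdot B,A]\|_F^2=2\big(1-F(t,A,B)\big)$ saturates; using $\|O\|_F\le\|O\|$ to carry the exponential tail over to the Frobenius norm, both the operator-norm and the normalized-Frobenius-norm commutators exhibit the cone. Here the corrected versions equal the bare ones, since $\|\mathcal{V}^\dag_0(t,0)\cdot B\|=\|\mathcal{V}^\dag_0(t,0)\cdot B\|_F=\|B\|$.

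\emph{Transfer to the dissipative channel.} By bilinearity of the commutator, $\|[X,A]\|\le 2\|A\|\|X\|$, and Lemma~\ref{lemma2},
\begin{align}
&\big\|[\mathcal{V}^\dag_1(t,0)\cdot B,A]-[\mathcal{V}^\dag_0(t,0)\cdot B,A]\big\| \nonumber\\
&\le 2\|A\|\,\big\|\mathcal{V}^\dag_1(t,0)\cdot B-\mathcal{V}^\dag_0(t,0)\cdot B\big\| \nonumber\\
&\sim 2\|A\|\,v_{LR}\Gamma t^2/a ,
\end{align}
valid once $t\gtrsim 3\xi/v_{LR}$, and the same bound holds with $\|\cdot\|_F$ in place of $\|\cdot\|$. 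Hence for $d_{AB}>v_{LR}t$,
\begin{align}
\big\|[\mathcal{V}^\dag_1(t,0)\cdot B,A]\big\| &\le c\|A\|\|B\|\,e^{-(d_{AB}-v_{LR}t)/\xi} \nonumber\\
&\quad+ 2\|A\|\,v_{LR}\Gamma t^2/a ,
\end{align}
so if $t\le\sqrt{\epsilon a/(v_{LR}\Gamma)}$ the perturbative term is $\lesssim\epsilon$ and the dissipative commutator is still small outside the cone and jumps to $O(1)$ once $d_{AB}\lesssim v_{LR}t$. Pushing $t$ to the endpoint $t_{\max}=\sqrt{\epsilon a/(v_{LR}\Gamma)}$, the front has reached $d_{AB}\sim v_{LR}t_{\max}=\sqrt{\epsilon a v_{LR}/\Gamma}$, which is the claimed lower bound on the width. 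Consistency of the two displayed ranges requires $t_{\max}\gtrsim 3\xi/v_{LR}$, i.e.\ $\Gamma\lesssim \epsilon a v_{LR}/(9\xi^2)$ --- precisely the ``sufficiently small $\Gamma$'' hypothesis.

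\emph{Normalization and the main obstacle.} In the same window $\Gamma t^2\lesssim\epsilon$, so Lemma~\ref{lemma2} also pins the denominators: $\|B\|\ge\|\mathcal{V}^\dag_1(t,0)\cdot B\|\ge\|B\|-O(\epsilon)$ --- the upper bound being the norm-nonincreasing property --- and likewise for $\|\cdot\|_F$; dividing the commutator by a factor this close to the constant $\|B\|$ changes it only by $O(\epsilon)$ and cannot erase the cone, so all four quantities reveal it with the same width estimate. I expect the delicate point to be Lemma~\ref{lemma2} itself: its $\Gamma O(t^2)$ estimate relies on the quasi-locality bound of Ref.~\cite{BK2012} with a truncation radius $d(\tau)\sim 2v_{LR}\tau+\xi$ that must stay ahead of the growing operator support throughout $[0,t]$, and only dominates the exponentially small truncation error once $t\gtrsim 3\xi/v_{LR}$; reconciling these ranges with $t\le t_{\max}$ is exactly what forces the smallness of $\Gamma$. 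This is also why the argument is cleanest for phase damping and phase depolarizing, where $\mathcal{V}^\dag_b=\mathcal{V}_f$ and $\mathcal{L}^\dag_D\cdot\sigma^{i_k}=-\Gamma\sigma^{i_k}$ make both the perturbation bound and the denominator control transparent, whereas amplitude damping --- whose $\mathcal{V}^\dag_b$ is not trace-preserving --- would need a separate analysis of $\|\mathcal{V}^\dag_b(t,0)\cdot B\|$.
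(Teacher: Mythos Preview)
Your proposal is correct and follows essentially the same route as the paper: invoke Lemma~\ref{lemma2} to get $\|B_1(t)-B_0(t)\|\lesssim v_{LR}\Gamma t^2/a\le\epsilon\|B\|$ in the stated time window, push this through the triangle inequality to the commutator (the paper writes $\|\mathcal{C}_A\|$ where you write $2\|A\|$), transfer to the Frobenius norm via $\|O\|_F\le\|O\|$, and control the denominators by the same $O(\epsilon)$ perturbation. Your explicit identification of the consistency constraint $\Gamma\lesssim\epsilon a v_{LR}/(9\xi^2)$ and the remark on why amplitude damping falls outside this argument match the paper's own discussion.
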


\begin{proof}
According to \textbf{Lemma 2}, if $ 3\xi/v_{LR} \lesssim t\leq \sqrt{\frac{\epsilon a}{v_{LR} \Gamma}}$, then one obtains $\|B_1(t) -B_0(t)\|\leq \epsilon \|B\|$ when comparing the operators $B_1(t)=\mathcal{V}^\dag_1(t,0)\cdot B$ in the dissipative channel and $B_0(t)=\mathcal{V}^\dag_0(t,0)\cdot B$ in the unitary channel. Applying the triangle inequality, one obtains
\begin{align}
& (1-\epsilon) \|B\| \leq \| B_1(t) \| \leq   (1+\epsilon)\|B\|, \nonumber \\
& \|[B_0(t), A]\| - \epsilon \|\mathcal{C}_A\| \| B\| \leq \|[B_1(t),A]\| \leq \nonumber\\
& \|[B_0(t), A]\| + \epsilon \|\mathcal{C}_A\| \| B\|, \nonumber \\
& (1+\epsilon)^{-1}\Big(\frac{\|[B_0(t), A]\|}{\|B\|} - \epsilon \|\mathcal{C}_A\|\Big) \leq \frac{\|[B_1(t),A]\|}{\|B_1(t)\| } \leq \nonumber\\ & (1-\epsilon)^{-1}\Big(\frac{\|[B_0(t), A]\|}{\|B\|} + \epsilon \|\mathcal{C}_A\|\Big),\nonumber
\end{align}
where the super-operator $\mathcal{C}_A$ is defined by $\mathcal{C}_A \cdot O =[O,A]$. The normalized Frobenius norm is less than or equal to the operator norm, i.e. $\|O\|_{F} \leq \|O\|$, so we get 
\begin{align}
&\|B_1(t)-B_0(t) \|_{F} \leq \|B_1(t)-B_0(t) \| \leq \epsilon \|B\| \nonumber\\
& \|[B_0(t), A]\|_F - \epsilon \|\mathcal{C}_A\| \| B\| \leq \|[B_1(t),A]\|_F \leq \nonumber\\
& \|[B_0(t), A]\|_F + \epsilon \|\mathcal{C}_A\| \| B\|, \nonumber \\
& \frac{\|[B_0(t), A]\|_{F} - \epsilon \|\mathcal{C}_A\| \|B\|}{\|B\|_{F} + \epsilon \|B\|} \leq\frac{\|[B_1(t),A]\|_{F}}{\|B_1(t)\|_{F} } \leq \nonumber\\
&\frac{\|[B_0(t), A]\|_{F} + \epsilon \|\mathcal{C}_A\| \|B\|}{\|B\|_{F} - \epsilon \|B\|}.\nonumber
\end{align}

In the unitary channel, $\|[B_0(t), A]\|$ and $\|[B_0(t), A]\|_{F}$ both are able to detect the light cone. Because $\epsilon$ is a small number, it is also small that the difference of the corresponding quantities between the dissipative and unitary channel. Thus, $\| [B_1(t), A] \|$, $\|[B_1(t),A]\|/\|B_1(t)\| $, $\| [B_1(t), A] \|_F$ and $\|[B_1(t),A]\|_F/\|B_1(t)\|_F$ are both able to detect the light cone in the time range  $ 3\xi/v_{LR} \lesssim t\leq \sqrt{\frac{\epsilon a}{v_{LR} \Gamma}}$. The width of the light cone is at least $\sqrt{\frac{\epsilon av_{LR}}{\Gamma}}$ for sufficiently small dissipation rate $\Gamma \ll \epsilon a v_{LR}/\xi^2$.
\end{proof}

\begin{corollary}\label{corollary1}
For sufficiently small dissipation rate $\Gamma \ll \epsilon a v_{LR}/\xi^2$, the lower bound $\sqrt{\frac{\epsilon av_{LR}}{\Gamma}}$ works for the width of the light cone revealed by the corrected OTOC in the chaotic Ising chain with dissipation of phase damping or phase depolarizing.
\end{corollary}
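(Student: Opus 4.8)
The plan is to reduce the statement to \textbf{Proposition 1} by means of an exact operator identity that is available precisely for the phase damping and phase depolarizing channels. The first step is to observe that for these two channels the backward adjoint dynamical map coincides with the forward dynamical map, $\mathcal{V}^\dag_b(t)=\mathcal{V}_f(t)$. Indeed, the corresponding Lindblad operators are Hermitian up to a scalar ($L_k=\sigma^z_k/\sqrt2$ for phase damping, $L_k=\sigma^{x,y,z}_k/2$ for phase depolarizing), so the dissipative part of $\mathcal{L}^\dag$ equals the dissipative part of $\mathcal{L}$; and reversing the sign of $H_s$ in passing from $H_f$ to $H_b$ turns the Hamiltonian part $-i[H_s,\cdot]$ of $\mathcal{L}_f$ into the Hamiltonian part of $\mathcal{L}^\dag_b$. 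Hence $\mathcal{L}^\dag_b=\mathcal{L}_f$, and the claimed equality of propagators follows.

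With $\rho_s(0)=I_s/2^N$ and $A=\sigma^z_i$, $B=\sigma^z_1$ both Hermitian and unitary, the next step is the algebraic identity
\[
2\Big(1-\frac{F(t,A,B)}{F(t,I,B)}\Big)=\frac{\|\,[\mathcal{V}^\dag_b(t)\cdot B,\,A]\,\|_F^2}{\|\mathcal{V}^\dag_b(t)\cdot B\|_F^2},
\]
obtained by setting $B_t:=\mathcal{V}^\dag_b(t)\cdot B=\mathcal{V}_f(t)\cdot B$, using $F(t,I,B)=\tr(B_t^\dag B_t)/2^N=\|B_t\|_F^2$, expanding $\tr([B_t,A]^\dag[B_t,A])$ into four terms, and simplifying with cyclicity of the trace and $A^\dag A=AA^\dag=I$; two of the terms combine into $2\|B_t\|_F^2$ and the other two into $-2F(t,A,B)$. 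Consequently the light cone revealed by the corrected OTOC $F(t,A,B)/F(t,I,B)$ is, point for point, the one revealed by $\|[B_1(t),A]\|_F/\|B_1(t)\|_F$, with $B_1(t)=\mathcal{V}^\dag_1(t,0)\cdot B$ the dissipative-channel Heisenberg evolution of \textbf{Proposition 1}.

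The last step is to verify that the chaotic Ising chain (\ref{ising}) with phase damping or phase depolarizing satisfies the hypotheses of \textbf{Lemma 2} and \textbf{Proposition 1}: $H_s$ is a sum of nearest-neighbor, hence local, terms so $\mathcal{L}^\dag_H=\sum_i\mathcal{L}^\dag_{H_i}$; each dissipative super-operator $\mathcal{L}^\dag_{D,k}$ acts only on site $k$; and, after absorbing the numerical prefactors of the Lindblad operators into $\Gamma$, one has $\|\mathcal{L}^\dag_{D,k}\|\le 1$. \textbf{Proposition 1} then asserts that $\|[B_1(t),A]\|_F/\|B_1(t)\|_F$ reveals the light cone for $3\xi/v_{LR}\lesssim t\le\sqrt{\epsilon a/(v_{LR}\Gamma)}$ whenever $\Gamma\ll\epsilon a v_{LR}/\xi^2$, giving a width at least $\sqrt{\epsilon a v_{LR}/\Gamma}$; transporting this through the identity above yields the corollary.

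I do not expect a serious obstacle here: the argument is essentially bookkeeping. The points that need care are checking that the reality and Hermiticity conditions line up so that the identity is exact (in particular that the $\Re$ in the definition of $F$ is harmless for these operators) and that the normalization of the dissipators is handled so that $\|\mathcal{L}^\dag_{D,k}\|\le 1$ is achieved by a harmless rescaling of $\Gamma$. One should also record why amplitude damping is excluded: there $\mathcal{V}^\dag_b(t)\ne\mathcal{V}_f(t)$, the identity fails, and the symmetric use of norm-nonincrease underlying \textbf{Proposition 1} does not transfer to the corrected OTOC — consistent with the discussion of the correcting factor in Sec.~\ref{sec5}.
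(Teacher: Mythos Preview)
Your proposal is correct and follows essentially the same approach as the paper: establish $\mathcal{V}^\dag_b(t)=\mathcal{V}_f(t)$ for phase damping and phase depolarizing, use this to derive the identity $2\big(1-F(t,A,B)/F(t,I,B)\big)=\|[\mathcal{V}^\dag_b(t)\cdot B,A]\|_F^2/\|\mathcal{V}^\dag_b(t)\cdot B\|_F^2$, and then invoke \textbf{Proposition 1}. Your write-up is more detailed than the paper's (which states the identity and cites the proposition in two sentences), but the logical route is identical.
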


\begin{proof}
In the channel of phase damping or phase depolarizing, the adjoint propagator $\mathcal{V}^\dag_b(t)$ is exactly equal to the propagator $\mathcal{V}_f(t)$, then
\begin{equation}
2\Big(1 - \frac{F(t,A, B)}{F(t,I,B)}\Big)=\frac{\|\, [\mathcal{V}^\dag_b(t)\cdot B, A] \,\|^2_{F}} { \|\mathcal{V}^\dag_b(t)\cdot B\|^2_{F}}.
\end{equation}
Based on \textbf{Proposition 1}, the lower bound $\sqrt{\frac{\epsilon av_{LR}}{\Gamma}}$ works for the width of the light cone revealed by the corrected OTOC in the channel of phase damping or phase depolarizing.
\end{proof}

\bibliographystyle{apsrev4-1}

\begin{thebibliography}{55}%
\makeatletter
\providecommand \@ifxundefined [1]{%
 \@ifx{#1\undefined}
}%
\providecommand \@ifnum [1]{%
 \ifnum #1\expandafter \@firstoftwo
 \else \expandafter \@secondoftwo
 \fi
}%
\providecommand \@ifx [1]{%
 \ifx #1\expandafter \@firstoftwo
 \else \expandafter \@secondoftwo
 \fi
}%
\providecommand \natexlab [1]{#1}%
\providecommand \enquote  [1]{``#1''}%
\providecommand \bibnamefont  [1]{#1}%
\providecommand \bibfnamefont [1]{#1}%
\providecommand \citenamefont [1]{#1}%
\providecommand \href@noop [0]{\@secondoftwo}%
\providecommand \href [0]{\begingroup \@sanitize@url \@href}%
\providecommand \@href[1]{\@@startlink{#1}\@@href}%
\providecommand \@@href[1]{\endgroup#1\@@endlink}%
\providecommand \@sanitize@url [0]{\catcode `\\12\catcode `\$12\catcode
  `\&12\catcode `\#12\catcode `\^12\catcode `\_12\catcode `\%12\relax}%
\providecommand \@@startlink[1]{}%
\providecommand \@@endlink[0]{}%
\providecommand \url  [0]{\begingroup\@sanitize@url \@url }%
\providecommand \@url [1]{\endgroup\@href {#1}{\urlprefix }}%
\providecommand \urlprefix  [0]{URL }%
\providecommand \Eprint [0]{\href }%
\providecommand \doibase [0]{http://dx.doi.org/}%
\providecommand \selectlanguage [0]{\@gobble}%
\providecommand \bibinfo  [0]{\@secondoftwo}%
\providecommand \bibfield  [0]{\@secondoftwo}%
\providecommand \translation [1]{[#1]}%
\providecommand \BibitemOpen [0]{}%
\providecommand \bibitemStop [0]{}%
\providecommand \bibitemNoStop [0]{.\EOS\space}%
\providecommand \EOS [0]{\spacefactor3000\relax}%
\providecommand \BibitemShut  [1]{\csname bibitem#1\endcsname}%
\let\auto@bib@innerbib\@empty
%</preamble>
\bibitem [{\citenamefont {Larkin}\ and\ \citenamefont
  {Ovchinnikov}(1969)}]{LO69}%
  \BibitemOpen
  \bibfield  {author} {\bibinfo {author} {\bibfnamefont {A.~I.}\ \bibnamefont
  {Larkin}}\ and\ \bibinfo {author} {\bibfnamefont {Y.~N.}\ \bibnamefont
  {Ovchinnikov}},\ }\href
  {http://www.jetp.ac.ru/cgi-bin/e/index/e/28/6/p1200?a=list} {\bibfield
  {journal} {\bibinfo  {journal} {JETP}\ }\textbf {\bibinfo {volume} {28}},\
  \bibinfo {pages} {1200} (\bibinfo {year} {1969})}\BibitemShut {NoStop}%
\bibitem [{\citenamefont {Kitaev}(2014)}]{Kit14}%
  \BibitemOpen
  \bibfield  {author} {\bibinfo {author} {\bibfnamefont {A.}~\bibnamefont
  {Kitaev}},\ }in\ \href {http://online.kitp.ucsb.edu/online/joint98/kitaev/}
  {\emph {\bibinfo {booktitle} {talk given at the Fundamental Physics Prize
  Symposium}}}\ (\bibinfo {year} {2014})\BibitemShut {NoStop}%
\bibitem [{\citenamefont {Kitaev}(2015)}]{Kitaev}%
  \BibitemOpen
  \bibfield  {author} {\bibinfo {author} {\bibfnamefont {A.}~\bibnamefont
  {Kitaev}},\ }in\ \href
  {http://online.kitp.ucsb.edu/online/entangled15/kitaev/} {\emph {\bibinfo
  {booktitle} {talk given at KITP Program: Entanglement in Strongly-Correlated
  Quantum Matter}}}\ (\bibinfo {year} {2015})\BibitemShut {NoStop}%
\bibitem [{\citenamefont {Shenker}\ and\ \citenamefont
  {Stanford}(2014{\natexlab{a}})}]{Shenker2014}%
  \BibitemOpen
  \bibfield  {author} {\bibinfo {author} {\bibfnamefont {S.~H.}\ \bibnamefont
  {Shenker}}\ and\ \bibinfo {author} {\bibfnamefont {D.}~\bibnamefont
  {Stanford}},\ }\href {http://dx.doi.org/10.1007/JHEP03(2014)067} {\bibfield
  {journal} {\bibinfo  {journal} {J. High Energ. Phys.}\ }\textbf {\bibinfo
  {volume} {3}},\ \bibinfo {pages} {67} (\bibinfo {year}
  {2014}{\natexlab{a}})}\BibitemShut {NoStop}%
\bibitem [{\citenamefont {Shenker}\ and\ \citenamefont
  {Stanford}(2014{\natexlab{b}})}]{Shenker201412}%
  \BibitemOpen
  \bibfield  {author} {\bibinfo {author} {\bibfnamefont {S.~H.}\ \bibnamefont
  {Shenker}}\ and\ \bibinfo {author} {\bibfnamefont {D.}~\bibnamefont
  {Stanford}},\ }\href {http://dx.doi.org/10.1007/JHEP12(2014)046} {\bibfield
  {journal} {\bibinfo  {journal} {J. High Energ. Phys.}\ }\textbf {\bibinfo
  {volume} {12}},\ \bibinfo {pages} {46} (\bibinfo {year}
  {2014}{\natexlab{b}})}\BibitemShut {NoStop}%
\bibitem [{\citenamefont {Shenker}\ and\ \citenamefont
  {Stanford}(2015)}]{Shenker2015}%
  \BibitemOpen
  \bibfield  {author} {\bibinfo {author} {\bibfnamefont {S.~H.}\ \bibnamefont
  {Shenker}}\ and\ \bibinfo {author} {\bibfnamefont {D.}~\bibnamefont
  {Stanford}},\ }\href
  {https://link.springer.com/article/10.1007/JHEP05(2015)132} {\bibfield
  {journal} {\bibinfo  {journal} {J. High Energ. Phys.}\ }\textbf {\bibinfo
  {volume} {5}},\ \bibinfo {pages} {132} (\bibinfo {year} {2015})}\BibitemShut
  {NoStop}%
\bibitem [{\citenamefont {Roberts}\ and\ \citenamefont
  {Stanford}(2015)}]{Roberts:2014ab}%
  \BibitemOpen
  \bibfield  {author} {\bibinfo {author} {\bibfnamefont {D.~A.}\ \bibnamefont
  {Roberts}}\ and\ \bibinfo {author} {\bibfnamefont {D.}~\bibnamefont
  {Stanford}},\ }\href
  {http://journals.aps.org/prl/abstract/10.1103/PhysRevLett.115.131603}
  {\bibfield  {journal} {\bibinfo  {journal} {Phys. Rev. Lett.}\ }\textbf
  {\bibinfo {volume} {115}},\ \bibinfo {pages} {131603} (\bibinfo {year}
  {2015})}\BibitemShut {NoStop}%
\bibitem [{\citenamefont {Roberts}\ \emph {et~al.}(2015)\citenamefont
  {Roberts}, \citenamefont {Stanford},\ and\ \citenamefont
  {Susskind}}]{Roberts2015}%
  \BibitemOpen
  \bibfield  {author} {\bibinfo {author} {\bibfnamefont {D.~A.}\ \bibnamefont
  {Roberts}}, \bibinfo {author} {\bibfnamefont {D.}~\bibnamefont {Stanford}}, \
  and\ \bibinfo {author} {\bibfnamefont {L.}~\bibnamefont {Susskind}},\ }\href
  {http://dx.doi.org/10.1007/JHEP03(2015)051} {\bibfield  {journal} {\bibinfo
  {journal} {J. High Energ. Phys.}\ }\textbf {\bibinfo {volume} {3}},\ \bibinfo
  {pages} {51} (\bibinfo {year} {2015})}\BibitemShut {NoStop}%
\bibitem [{\citenamefont {Roberts}\ and\ \citenamefont
  {Swingle}(2016)}]{Roberts:2016aa}%
  \BibitemOpen
  \bibfield  {author} {\bibinfo {author} {\bibfnamefont {D.~A.}\ \bibnamefont
  {Roberts}}\ and\ \bibinfo {author} {\bibfnamefont {B.}~\bibnamefont
  {Swingle}},\ }\href
  {http://journals.aps.org/prl/abstract/10.1103/PhysRevLett.117.091602}
  {\bibfield  {journal} {\bibinfo  {journal} {Phys. Rev. Lett.}\ }\textbf
  {\bibinfo {volume} {117}},\ \bibinfo {pages} {091602} (\bibinfo {year}
  {2016})}\BibitemShut {NoStop}%
\bibitem [{\citenamefont {Hosur}\ \emph {et~al.}(2016)\citenamefont {Hosur},
  \citenamefont {Qi}, \citenamefont {Roberts},\ and\ \citenamefont
  {Yoshida}}]{Hosur2016}%
  \BibitemOpen
  \bibfield  {author} {\bibinfo {author} {\bibfnamefont {P.}~\bibnamefont
  {Hosur}}, \bibinfo {author} {\bibfnamefont {X.-L.}\ \bibnamefont {Qi}},
  \bibinfo {author} {\bibfnamefont {D.~A.}\ \bibnamefont {Roberts}}, \ and\
  \bibinfo {author} {\bibfnamefont {B.}~\bibnamefont {Yoshida}},\ }\href
  {http://dx.doi.org/10.1007/JHEP02(2016)004} {\bibfield  {journal} {\bibinfo
  {journal} {J. High Energ. Phys.}\ }\textbf {\bibinfo {volume} {2}},\ \bibinfo
  {pages} {4} (\bibinfo {year} {2016})}\BibitemShut {NoStop}%
\bibitem [{\citenamefont {Maldacena}\ \emph {et~al.}(2016)\citenamefont
  {Maldacena}, \citenamefont {Shenker},\ and\ \citenamefont
  {Stanford}}]{Maldacena2016}%
  \BibitemOpen
  \bibfield  {author} {\bibinfo {author} {\bibfnamefont {J.}~\bibnamefont
  {Maldacena}}, \bibinfo {author} {\bibfnamefont {S.~H.}\ \bibnamefont
  {Shenker}}, \ and\ \bibinfo {author} {\bibfnamefont {D.}~\bibnamefont
  {Stanford}},\ }\href {http://dx.doi.org/10.1007/JHEP08(2016)106} {\bibfield
  {journal} {\bibinfo  {journal} {J. High Energ. Phys.}\ }\textbf {\bibinfo
  {volume} {8}},\ \bibinfo {pages} {106} (\bibinfo {year} {2016})}\BibitemShut
  {NoStop}%
\bibitem [{\citenamefont {Polchinski}\ and\ \citenamefont
  {Rosenhaus}(2016)}]{polchinski2016spectrum}%
  \BibitemOpen
  \bibfield  {author} {\bibinfo {author} {\bibfnamefont {J.}~\bibnamefont
  {Polchinski}}\ and\ \bibinfo {author} {\bibfnamefont {V.}~\bibnamefont
  {Rosenhaus}},\ }\href {https://doi.org/10.1007/JHEP04(2016)001} {\bibfield
  {journal} {\bibinfo  {journal} {J. High Energ. Phys.}\ }\textbf {\bibinfo
  {volume} {4}},\ \bibinfo {pages} {1} (\bibinfo {year} {2016})}\BibitemShut
  {NoStop}%
\bibitem [{\citenamefont {Gu}\ \emph {et~al.}(2016)\citenamefont {Gu},
  \citenamefont {Qi},\ and\ \citenamefont {Stanford}}]{gu2016local}%
  \BibitemOpen
  \bibfield  {author} {\bibinfo {author} {\bibfnamefont {Y.}~\bibnamefont
  {Gu}}, \bibinfo {author} {\bibfnamefont {X.-L.}\ \bibnamefont {Qi}}, \ and\
  \bibinfo {author} {\bibfnamefont {D.}~\bibnamefont {Stanford}},\ }\href
  {https://doi.org/10.1007/JHEP05(2017)125} {\bibfield  {journal} {\bibinfo
  {journal} {J. High Energ. Phys.}\ }\textbf {\bibinfo {volume} {5}},\ \bibinfo
  {pages} {125} (\bibinfo {year} {2016})}\BibitemShut {NoStop}%
\bibitem [{\citenamefont {Rozenbaum}\ \emph {et~al.}(2017)\citenamefont
  {Rozenbaum}, \citenamefont {Ganeshan},\ and\ \citenamefont
  {Galitski}}]{PRL118.086801}%
  \BibitemOpen
  \bibfield  {author} {\bibinfo {author} {\bibfnamefont {E.~B.}\ \bibnamefont
  {Rozenbaum}}, \bibinfo {author} {\bibfnamefont {S.}~\bibnamefont {Ganeshan}},
  \ and\ \bibinfo {author} {\bibfnamefont {V.}~\bibnamefont {Galitski}},\
  }\href {\doibase 10.1103/PhysRevLett.118.086801} {\bibfield  {journal}
  {\bibinfo  {journal} {Phys. Rev. Lett.}\ }\textbf {\bibinfo {volume} {118}},\
  \bibinfo {pages} {086801} (\bibinfo {year} {2017})}\BibitemShut {NoStop}%
\bibitem [{\citenamefont {Roberts}\ and\ \citenamefont
  {Yoshida}(2016)}]{roberts2016chaos}%
  \BibitemOpen
  \bibfield  {author} {\bibinfo {author} {\bibfnamefont {D.~A.}\ \bibnamefont
  {Roberts}}\ and\ \bibinfo {author} {\bibfnamefont {B.}~\bibnamefont
  {Yoshida}},\ }\href {https://doi.org/10.1007/JHEP04(2017)121} {\bibfield
  {journal} {\bibinfo  {journal} {J. High Energ. Phys.}\ }\textbf {\bibinfo
  {volume} {4}},\ \bibinfo {pages} {121} (\bibinfo {year} {2016})}\BibitemShut
  {NoStop}%
\bibitem [{\citenamefont {Patel}\ \emph {et~al.}(2017)\citenamefont {Patel},
  \citenamefont {Chowdhury}, \citenamefont {Sachdev},\ and\ \citenamefont
  {Swingle}}]{patel2017}%
  \BibitemOpen
  \bibfield  {author} {\bibinfo {author} {\bibfnamefont {A.~A.}\ \bibnamefont
  {Patel}}, \bibinfo {author} {\bibfnamefont {D.}~\bibnamefont {Chowdhury}},
  \bibinfo {author} {\bibfnamefont {S.}~\bibnamefont {Sachdev}}, \ and\
  \bibinfo {author} {\bibfnamefont {B.}~\bibnamefont {Swingle}},\ }\href
  {https://journals.aps.org/prx/abstract/10.1103/PhysRevX.7.031047} {\bibfield
  {journal} {\bibinfo  {journal} {Phys. Rev. X}\ }\textbf {\bibinfo {volume}
  {7}},\ \bibinfo {pages} {031047} (\bibinfo {year} {2017})}\BibitemShut
  {NoStop}%
\bibitem [{\citenamefont {Huang}\ \emph
  {et~al.}(2017{\natexlab{a}})\citenamefont {Huang}, \citenamefont {Brandao},\
  and\ \citenamefont {Zhang}}]{huang2017}%
  \BibitemOpen
  \bibfield  {author} {\bibinfo {author} {\bibfnamefont {Y.}~\bibnamefont
  {Huang}}, \bibinfo {author} {\bibfnamefont {F.~G.}\ \bibnamefont {Brandao}},
  \ and\ \bibinfo {author} {\bibfnamefont {Y.-L.}\ \bibnamefont {Zhang}},\
  }\href {https://arxiv.org/abs/1705.07597} {\bibfield  {journal} {\bibinfo
  {journal} {arXiv: 1705.07597}\ } (\bibinfo {year}
  {2017}{\natexlab{a}})}\BibitemShut {NoStop}%
\bibitem [{\citenamefont {Mezei}\ and\ \citenamefont
  {Stanford}(2017)}]{Mezei2017}%
  \BibitemOpen
  \bibfield  {author} {\bibinfo {author} {\bibfnamefont {M.}~\bibnamefont
  {Mezei}}\ and\ \bibinfo {author} {\bibfnamefont {D.}~\bibnamefont
  {Stanford}},\ }\href {https://doi.org/10.1007/JHEP05(2017)065} {\bibfield
  {journal} {\bibinfo  {journal} {J. High Energ. Phys.}\ }\textbf {\bibinfo
  {volume} {5}},\ \bibinfo {pages} {65} (\bibinfo {year} {2017})}\BibitemShut
  {NoStop}%
\bibitem [{\citenamefont {Lucas}(2017)}]{Lucas:2017aa}%
  \BibitemOpen
  \bibfield  {author} {\bibinfo {author} {\bibfnamefont {A.}~\bibnamefont
  {Lucas}},\ }\href {https://arxiv.org/abs/1710.01005} {\bibfield  {journal}
  {\bibinfo  {journal} {arXiv: 1710.01005}\ } (\bibinfo {year}
  {2017})}\BibitemShut {NoStop}%
\bibitem [{\citenamefont {Huang}\ \emph
  {et~al.}(2017{\natexlab{b}})\citenamefont {Huang}, \citenamefont {Zhang},\
  and\ \citenamefont {Chen}}]{huang2016}%
  \BibitemOpen
  \bibfield  {author} {\bibinfo {author} {\bibfnamefont {Y.}~\bibnamefont
  {Huang}}, \bibinfo {author} {\bibfnamefont {Y.-L.}\ \bibnamefont {Zhang}}, \
  and\ \bibinfo {author} {\bibfnamefont {X.}~\bibnamefont {Chen}},\ }\href
  {\doibase 10.1002/andp.201600318} {\bibfield  {journal} {\bibinfo  {journal}
  {Ann. Phys. (Berl.)}\ }\textbf {\bibinfo {volume} {529}},\ \bibinfo {pages}
  {1600318} (\bibinfo {year} {2017}{\natexlab{b}})}\BibitemShut {NoStop}%
\bibitem [{\citenamefont {{Fan}}\ \emph {et~al.}(2016)\citenamefont {{Fan}},
  \citenamefont {{Zhang}}, \citenamefont {{Shen}},\ and\ \citenamefont
  {{Zhai}}}]{FZSZ16}%
  \BibitemOpen
  \bibfield  {author} {\bibinfo {author} {\bibfnamefont {R.}~\bibnamefont
  {{Fan}}}, \bibinfo {author} {\bibfnamefont {P.}~\bibnamefont {{Zhang}}},
  \bibinfo {author} {\bibfnamefont {H.}~\bibnamefont {{Shen}}}, \ and\ \bibinfo
  {author} {\bibfnamefont {H.}~\bibnamefont {{Zhai}}},\ }\href
  {https://doi.org/10.1016/j.scib.2017.04.011} {\bibfield  {journal} {\bibinfo
  {journal} {Sci. Bull.}\ }\textbf {\bibinfo {volume} {62}},\ \bibinfo {pages}
  {707} (\bibinfo {year} {2016})}\BibitemShut {NoStop}%
\bibitem [{\citenamefont {{Chen}}(2016)}]{C16}%
  \BibitemOpen
  \bibfield  {author} {\bibinfo {author} {\bibfnamefont {Y.}~\bibnamefont
  {{Chen}}},\ }\href {https://arxiv.org/abs/1608.02765} {\bibfield  {journal}
  {\bibinfo  {journal} {arXiv:1608.02765}\ } (\bibinfo {year}
  {2016})}\BibitemShut {NoStop}%
\bibitem [{\citenamefont {{Swingle}}\ and\ \citenamefont
  {{Chowdhury}}(2017)}]{SC16}%
  \BibitemOpen
  \bibfield  {author} {\bibinfo {author} {\bibfnamefont {B.}~\bibnamefont
  {{Swingle}}}\ and\ \bibinfo {author} {\bibfnamefont {D.}~\bibnamefont
  {{Chowdhury}}},\ }\href
  {https://journals.aps.org/prb/abstract/10.1103/PhysRevB.95.060201} {\bibfield
   {journal} {\bibinfo  {journal} {Phys. Rev. B}\ }\textbf {\bibinfo {volume}
  {95}},\ \bibinfo {pages} {060201(R)} (\bibinfo {year} {2017})}\BibitemShut
  {NoStop}%
\bibitem [{\citenamefont {{He}}\ and\ \citenamefont {{Lu}}(2017)}]{HL16}%
  \BibitemOpen
  \bibfield  {author} {\bibinfo {author} {\bibfnamefont {R.-Q.}\ \bibnamefont
  {{He}}}\ and\ \bibinfo {author} {\bibfnamefont {Z.-Y.}\ \bibnamefont
  {{Lu}}},\ }\href
  {http://journals.aps.org/prb/abstract/10.1103/PhysRevB.95.054201} {\bibfield
  {journal} {\bibinfo  {journal} {Phys. Rev. B}\ }\textbf {\bibinfo {volume}
  {95}},\ \bibinfo {pages} {054201} (\bibinfo {year} {2017})}\BibitemShut
  {NoStop}%
\bibitem [{\citenamefont {Chen}\ \emph {et~al.}(2017)\citenamefont {Chen},
  \citenamefont {Zhou}, \citenamefont {Huse},\ and\ \citenamefont
  {Fradkin}}]{chen2016}%
  \BibitemOpen
  \bibfield  {author} {\bibinfo {author} {\bibfnamefont {X.}~\bibnamefont
  {Chen}}, \bibinfo {author} {\bibfnamefont {T.}~\bibnamefont {Zhou}}, \bibinfo
  {author} {\bibfnamefont {D.~A.}\ \bibnamefont {Huse}}, \ and\ \bibinfo
  {author} {\bibfnamefont {E.}~\bibnamefont {Fradkin}},\ }\href {\doibase
  10.1002/andp.201600332} {\bibfield  {journal} {\bibinfo  {journal} {Ann.
  Phys. (Berl.)}\ }\textbf {\bibinfo {volume} {529}},\ \bibinfo {pages}
  {1600332} (\bibinfo {year} {2017})}\BibitemShut {NoStop}%
\bibitem [{\citenamefont {Slagle}\ \emph {et~al.}(2017)\citenamefont {Slagle},
  \citenamefont {Bi}, \citenamefont {You},\ and\ \citenamefont
  {Xu}}]{slagle2017out}%
  \BibitemOpen
  \bibfield  {author} {\bibinfo {author} {\bibfnamefont {K.}~\bibnamefont
  {Slagle}}, \bibinfo {author} {\bibfnamefont {Z.}~\bibnamefont {Bi}}, \bibinfo
  {author} {\bibfnamefont {Y.-Z.}\ \bibnamefont {You}}, \ and\ \bibinfo
  {author} {\bibfnamefont {C.}~\bibnamefont {Xu}},\ }\href
  {https://journals.aps.org/prb/abstract/10.1103/PhysRevB.95.165136} {\bibfield
   {journal} {\bibinfo  {journal} {Phys. Rev. B}\ }\textbf {\bibinfo {volume}
  {95}},\ \bibinfo {pages} {165136} (\bibinfo {year} {2017})}\BibitemShut
  {NoStop}%
\bibitem [{\citenamefont {Song}\ \emph {et~al.}(2017)\citenamefont {Song},
  \citenamefont {Jian},\ and\ \citenamefont {Balents}}]{Song2017}%
  \BibitemOpen
  \bibfield  {author} {\bibinfo {author} {\bibfnamefont {X.-Y.}\ \bibnamefont
  {Song}}, \bibinfo {author} {\bibfnamefont {C.-M.}\ \bibnamefont {Jian}}, \
  and\ \bibinfo {author} {\bibfnamefont {L.}~\bibnamefont {Balents}},\ }\href
  {\doibase 10.1103/PhysRevLett.119.216601} {\bibfield  {journal} {\bibinfo
  {journal} {Phys. Rev. Lett.}\ }\textbf {\bibinfo {volume} {119}},\ \bibinfo
  {pages} {216601} (\bibinfo {year} {2017})}\BibitemShut {NoStop}%
\bibitem [{\citenamefont {{Ben-Zion}}\ and\ \citenamefont
  {{McGreevy}}(2017)}]{Ben-Zion2017}%
  \BibitemOpen
  \bibfield  {author} {\bibinfo {author} {\bibfnamefont {D.}~\bibnamefont
  {{Ben-Zion}}}\ and\ \bibinfo {author} {\bibfnamefont {J.}~\bibnamefont
  {{McGreevy}}},\ }\href {https://arxiv.org/abs/1711.02686} {\bibfield
  {journal} {\bibinfo  {journal} {arXiv: 1711.02686}\ } (\bibinfo {year}
  {2017})}\BibitemShut {NoStop}%
\bibitem [{\citenamefont {{Xu}}\ and\ \citenamefont
  {{Swingle}}(2018)}]{Xu2018}%
  \BibitemOpen
  \bibfield  {author} {\bibinfo {author} {\bibfnamefont {S.}~\bibnamefont
  {{Xu}}}\ and\ \bibinfo {author} {\bibfnamefont {B.}~\bibnamefont
  {{Swingle}}},\ }\href {https://arxiv.org/abs/1802.00801} {\bibfield
  {journal} {\bibinfo  {journal} {arXiv: 1802.00801}\ } (\bibinfo {year}
  {2018})}\BibitemShut {NoStop}%
\bibitem [{\citenamefont {Swingle}\ \emph {et~al.}(2016)\citenamefont
  {Swingle}, \citenamefont {Bentsen}, \citenamefont {Schleier-Smith},\ and\
  \citenamefont {Hayden}}]{Swingle:2016aa}%
  \BibitemOpen
  \bibfield  {author} {\bibinfo {author} {\bibfnamefont {B.}~\bibnamefont
  {Swingle}}, \bibinfo {author} {\bibfnamefont {G.}~\bibnamefont {Bentsen}},
  \bibinfo {author} {\bibfnamefont {M.}~\bibnamefont {Schleier-Smith}}, \ and\
  \bibinfo {author} {\bibfnamefont {P.}~\bibnamefont {Hayden}},\ }\href
  {http://journals.aps.org/pra/abstract/10.1103/PhysRevA.94.040302} {\bibfield
  {journal} {\bibinfo  {journal} {Phys. Rev. A}\ }\textbf {\bibinfo {volume}
  {94}},\ \bibinfo {pages} {040302} (\bibinfo {year} {2016})}\BibitemShut
  {NoStop}%
\bibitem [{\citenamefont {{Zhu}}\ \emph {et~al.}(2016)\citenamefont {{Zhu}},
  \citenamefont {{Hafezi}},\ and\ \citenamefont {{Grover}}}]{ZHG16}%
  \BibitemOpen
  \bibfield  {author} {\bibinfo {author} {\bibfnamefont {G.}~\bibnamefont
  {{Zhu}}}, \bibinfo {author} {\bibfnamefont {M.}~\bibnamefont {{Hafezi}}}, \
  and\ \bibinfo {author} {\bibfnamefont {T.}~\bibnamefont {{Grover}}},\ }\href
  {https://journals.aps.org/pra/abstract/10.1103/PhysRevA.94.062329} {\bibfield
   {journal} {\bibinfo  {journal} {Phys. Rev. A}\ }\textbf {\bibinfo {volume}
  {94}},\ \bibinfo {pages} {062329} (\bibinfo {year} {2016})}\BibitemShut
  {NoStop}%
\bibitem [{\citenamefont {Yao}\ \emph {et~al.}(2016)\citenamefont {Yao},
  \citenamefont {Grusdt}, \citenamefont {Swingle}, \citenamefont {Lukin},
  \citenamefont {Stamper-Kurn}, \citenamefont {Moore},\ and\ \citenamefont
  {Demler}}]{YGS+16}%
  \BibitemOpen
  \bibfield  {author} {\bibinfo {author} {\bibfnamefont {N.~Y.}\ \bibnamefont
  {Yao}}, \bibinfo {author} {\bibfnamefont {F.}~\bibnamefont {Grusdt}},
  \bibinfo {author} {\bibfnamefont {B.}~\bibnamefont {Swingle}}, \bibinfo
  {author} {\bibfnamefont {M.~D.}\ \bibnamefont {Lukin}}, \bibinfo {author}
  {\bibfnamefont {D.~M.}\ \bibnamefont {Stamper-Kurn}}, \bibinfo {author}
  {\bibfnamefont {J.~E.}\ \bibnamefont {Moore}}, \ and\ \bibinfo {author}
  {\bibfnamefont {E.~A.}\ \bibnamefont {Demler}},\ }\href
  {https://arxiv.org/abs/1607.01801} {\bibfield  {journal} {\bibinfo  {journal}
  {arXiv: 1607.01801}\ } (\bibinfo {year} {2016})}\BibitemShut {NoStop}%
\bibitem [{\citenamefont {Tsuji}\ \emph {et~al.}(2017)\citenamefont {Tsuji},
  \citenamefont {Werner},\ and\ \citenamefont {Ueda}}]{tsuji2017exact}%
  \BibitemOpen
  \bibfield  {author} {\bibinfo {author} {\bibfnamefont {N.}~\bibnamefont
  {Tsuji}}, \bibinfo {author} {\bibfnamefont {P.}~\bibnamefont {Werner}}, \
  and\ \bibinfo {author} {\bibfnamefont {M.}~\bibnamefont {Ueda}},\ }\href
  {https://journals.aps.org/pra/abstract/10.1103/PhysRevA.95.011601} {\bibfield
   {journal} {\bibinfo  {journal} {Phys. Rev. A}\ }\textbf {\bibinfo {volume}
  {95}},\ \bibinfo {pages} {011601} (\bibinfo {year} {2017})}\BibitemShut
  {NoStop}%
\bibitem [{\citenamefont {Bohrdt}\ \emph {et~al.}(2017)\citenamefont {Bohrdt},
  \citenamefont {Mendl}, \citenamefont {Endres},\ and\ \citenamefont
  {Knap}}]{bohrdt2016}%
  \BibitemOpen
  \bibfield  {author} {\bibinfo {author} {\bibfnamefont {A.}~\bibnamefont
  {Bohrdt}}, \bibinfo {author} {\bibfnamefont {C.}~\bibnamefont {Mendl}},
  \bibinfo {author} {\bibfnamefont {M.}~\bibnamefont {Endres}}, \ and\ \bibinfo
  {author} {\bibfnamefont {M.}~\bibnamefont {Knap}},\ }\href
  {http://iopscience.iop.org/article/10.1088/1367-2630/aa719b} {\bibfield
  {journal} {\bibinfo  {journal} {New J. Phys.}\ }\textbf {\bibinfo {volume}
  {19}},\ \bibinfo {pages} {063001} (\bibinfo {year} {2017})}\BibitemShut
  {NoStop}%
\bibitem [{\citenamefont {G{\"a}rttner}\ \emph {et~al.}(2017)\citenamefont
  {G{\"a}rttner}, \citenamefont {Bohnet}, \citenamefont {Safavi-Naini},
  \citenamefont {Wall}, \citenamefont {Bollinger},\ and\ \citenamefont
  {Rey}}]{garttner2016}%
  \BibitemOpen
  \bibfield  {author} {\bibinfo {author} {\bibfnamefont {M.}~\bibnamefont
  {G{\"a}rttner}}, \bibinfo {author} {\bibfnamefont {J.~G.}\ \bibnamefont
  {Bohnet}}, \bibinfo {author} {\bibfnamefont {A.}~\bibnamefont
  {Safavi-Naini}}, \bibinfo {author} {\bibfnamefont {M.~L.}\ \bibnamefont
  {Wall}}, \bibinfo {author} {\bibfnamefont {J.~J.}\ \bibnamefont {Bollinger}},
  \ and\ \bibinfo {author} {\bibfnamefont {A.~M.}\ \bibnamefont {Rey}},\ }\href
  {https://www.nature.com/nphys/journal/v13/n8/full/nphys4119.html} {\bibfield
  {journal} {\bibinfo  {journal} {Nat. Phys.}\ }\textbf {\bibinfo {volume}
  {13}},\ \bibinfo {pages} {781} (\bibinfo {year} {2017})}\BibitemShut
  {NoStop}%
\bibitem [{\citenamefont {Li}\ \emph {et~al.}(2017)\citenamefont {Li},
  \citenamefont {Fan}, \citenamefont {Wang}, \citenamefont {Ye}, \citenamefont
  {Zeng}, \citenamefont {Zhai}, \citenamefont {Peng},\ and\ \citenamefont
  {Du}}]{li2016measuring}%
  \BibitemOpen
  \bibfield  {author} {\bibinfo {author} {\bibfnamefont {J.}~\bibnamefont
  {Li}}, \bibinfo {author} {\bibfnamefont {R.}~\bibnamefont {Fan}}, \bibinfo
  {author} {\bibfnamefont {H.}~\bibnamefont {Wang}}, \bibinfo {author}
  {\bibfnamefont {B.}~\bibnamefont {Ye}}, \bibinfo {author} {\bibfnamefont
  {B.}~\bibnamefont {Zeng}}, \bibinfo {author} {\bibfnamefont {H.}~\bibnamefont
  {Zhai}}, \bibinfo {author} {\bibfnamefont {X.}~\bibnamefont {Peng}}, \ and\
  \bibinfo {author} {\bibfnamefont {J.}~\bibnamefont {Du}},\ }\href
  {https://journals.aps.org/prx/abstract/10.1103/PhysRevX.7.031011} {\bibfield
  {journal} {\bibinfo  {journal} {Phys. Rev. X}\ } (\bibinfo {year}
  {2017})}\BibitemShut {NoStop}%
\bibitem [{\citenamefont {Halpern}(2017)}]{Halpern:2017ab}%
  \BibitemOpen
  \bibfield  {author} {\bibinfo {author} {\bibfnamefont {N.~Y.}\ \bibnamefont
  {Halpern}},\ }\href {\doibase 10.1103/PhysRevA.95.012120} {\bibfield
  {journal} {\bibinfo  {journal} {Phys. Rev. A}\ }\textbf {\bibinfo {volume}
  {95}},\ \bibinfo {pages} {012120} (\bibinfo {year} {2017})}\BibitemShut
  {NoStop}%
\bibitem [{\citenamefont {Halpern}\ \emph {et~al.}(2017)\citenamefont
  {Halpern}, \citenamefont {Swingle},\ and\ \citenamefont
  {Dressel}}]{Halpern:2017aa}%
  \BibitemOpen
  \bibfield  {author} {\bibinfo {author} {\bibfnamefont {N.~Y.}\ \bibnamefont
  {Halpern}}, \bibinfo {author} {\bibfnamefont {B.}~\bibnamefont {Swingle}}, \
  and\ \bibinfo {author} {\bibfnamefont {J.}~\bibnamefont {Dressel}},\ }\href
  {https://arxiv.org/abs/1704.01971} {\bibfield  {journal} {\bibinfo  {journal}
  {arXiv: 1704.01971}\ } (\bibinfo {year} {2017})}\BibitemShut {NoStop}%
\bibitem [{\citenamefont {Ba\~nuls}\ \emph {et~al.}(2011)\citenamefont
  {Ba\~nuls}, \citenamefont {Cirac},\ and\ \citenamefont
  {Hastings}}]{PRL106.050405}%
  \BibitemOpen
  \bibfield  {author} {\bibinfo {author} {\bibfnamefont {M.~C.}\ \bibnamefont
  {Ba\~nuls}}, \bibinfo {author} {\bibfnamefont {J.~I.}\ \bibnamefont {Cirac}},
  \ and\ \bibinfo {author} {\bibfnamefont {M.~B.}\ \bibnamefont {Hastings}},\
  }\href {\doibase 10.1103/PhysRevLett.106.050405} {\bibfield  {journal}
  {\bibinfo  {journal} {Phys. Rev. Lett.}\ }\textbf {\bibinfo {volume} {106}},\
  \bibinfo {pages} {050405} (\bibinfo {year} {2011})}\BibitemShut {NoStop}%
\bibitem [{\citenamefont {Breuer}\ and\ \citenamefont
  {Petruccione}(2002)}]{breuer2002}%
  \BibitemOpen
  \bibfield  {author} {\bibinfo {author} {\bibfnamefont {H.-P.}\ \bibnamefont
  {Breuer}}\ and\ \bibinfo {author} {\bibfnamefont {F.}~\bibnamefont
  {Petruccione}},\ }\href
  {http://www.oxfordscholarship.com/view/10.1093/acprof:oso/9780199213900.001.0001/acprof-9780199213900}
  {\emph {\bibinfo {title} {The theory of open quantum systems}}}\ (\bibinfo
  {publisher} {Oxford University Press},\ \bibinfo {year} {2002})\BibitemShut
  {NoStop}%
\bibitem [{\citenamefont {Zeng}\ \emph {et~al.}(2015)\citenamefont {Zeng},
  \citenamefont {Chen}, \citenamefont {Zhou},\ and\ \citenamefont
  {Wen}}]{zeng2015}%
  \BibitemOpen
  \bibfield  {author} {\bibinfo {author} {\bibfnamefont {B.}~\bibnamefont
  {Zeng}}, \bibinfo {author} {\bibfnamefont {X.}~\bibnamefont {Chen}}, \bibinfo
  {author} {\bibfnamefont {D.-L.}\ \bibnamefont {Zhou}}, \ and\ \bibinfo
  {author} {\bibfnamefont {X.-G.}\ \bibnamefont {Wen}},\ }\href
  {https://arxiv.org/abs/1508.02595} {\bibfield  {journal} {\bibinfo  {journal}
  {arXiv:1508.02595}\ } (\bibinfo {year} {2015})}\BibitemShut {NoStop}%
\bibitem [{\citenamefont {Johansson}\ \emph {et~al.}(2012)\citenamefont
  {Johansson}, \citenamefont {Nation},\ and\ \citenamefont {Nori}}]{qutip1}%
  \BibitemOpen
  \bibfield  {author} {\bibinfo {author} {\bibfnamefont {J.}~\bibnamefont
  {Johansson}}, \bibinfo {author} {\bibfnamefont {P.}~\bibnamefont {Nation}}, \
  and\ \bibinfo {author} {\bibfnamefont {F.}~\bibnamefont {Nori}},\ }\href
  {http://www.sciencedirect.com/science/article/pii/S0010465512000835}
  {\bibfield  {journal} {\bibinfo  {journal} {Comput. Phys. Commun.}\ }\textbf
  {\bibinfo {volume} {183}},\ \bibinfo {pages} {1760 } (\bibinfo {year}
  {2012})}\BibitemShut {NoStop}%
\bibitem [{\citenamefont {Johansson}\ \emph {et~al.}(2013)\citenamefont
  {Johansson}, \citenamefont {Nation},\ and\ \citenamefont {Nori}}]{qutip2}%
  \BibitemOpen
  \bibfield  {author} {\bibinfo {author} {\bibfnamefont {J.}~\bibnamefont
  {Johansson}}, \bibinfo {author} {\bibfnamefont {P.}~\bibnamefont {Nation}}, \
  and\ \bibinfo {author} {\bibfnamefont {F.}~\bibnamefont {Nori}},\ }\href
  {http://www.sciencedirect.com/science/article/pii/S0010465512003955}
  {\bibfield  {journal} {\bibinfo  {journal} {Comput. Phys. Commun.}\ }\textbf
  {\bibinfo {volume} {184}},\ \bibinfo {pages} {1234 } (\bibinfo {year}
  {2013})}\BibitemShut {NoStop}%
\bibitem [{\citenamefont {Vidal}(2004)}]{vidal2004efficient}%
  \BibitemOpen
  \bibfield  {author} {\bibinfo {author} {\bibfnamefont {G.}~\bibnamefont
  {Vidal}},\ }\href
  {https://journals.aps.org/prl/pdf/10.1103/PhysRevLett.93.040502} {\bibfield
  {journal} {\bibinfo  {journal} {Phys. Rev. Lett.}\ }\textbf {\bibinfo
  {volume} {93}},\ \bibinfo {pages} {040502} (\bibinfo {year}
  {2004})}\BibitemShut {NoStop}%
\bibitem [{\citenamefont {Zwolak}\ and\ \citenamefont
  {Vidal}(2004)}]{zwolak2004mixed}%
  \BibitemOpen
  \bibfield  {author} {\bibinfo {author} {\bibfnamefont {M.}~\bibnamefont
  {Zwolak}}\ and\ \bibinfo {author} {\bibfnamefont {G.}~\bibnamefont {Vidal}},\
  }\href {https://journals.aps.org/prl/pdf/10.1103/PhysRevLett.93.207205}
  {\bibfield  {journal} {\bibinfo  {journal} {Phys. Rev. Lett.}\ }\textbf
  {\bibinfo {volume} {93}},\ \bibinfo {pages} {207205} (\bibinfo {year}
  {2004})}\BibitemShut {NoStop}%
\bibitem [{\citenamefont {Schollw{\"o}ck}(2011)}]{schollwock2011}%
  \BibitemOpen
  \bibfield  {author} {\bibinfo {author} {\bibfnamefont {U.}~\bibnamefont
  {Schollw{\"o}ck}},\ }\href
  {http://www.sciencedirect.com/science/article/pii/S0003491610001752}
  {\bibfield  {journal} {\bibinfo  {journal} {Ann. Phys.}\ }\textbf {\bibinfo
  {volume} {326}},\ \bibinfo {pages} {96} (\bibinfo {year} {2011})}\BibitemShut
  {NoStop}%
\bibitem [{\citenamefont {Lieb}\ and\ \citenamefont {Robinson}(1972)}]{LR72}%
  \BibitemOpen
  \bibfield  {author} {\bibinfo {author} {\bibfnamefont {E.~H.}\ \bibnamefont
  {Lieb}}\ and\ \bibinfo {author} {\bibfnamefont {D.~W.}\ \bibnamefont
  {Robinson}},\ }\href {\doibase 10.1007/BF01645779} {\bibfield  {journal}
  {\bibinfo  {journal} {Commun. Math. Phys.}\ }\textbf {\bibinfo {volume}
  {28}},\ \bibinfo {pages} {251} (\bibinfo {year} {1972})}\BibitemShut
  {NoStop}%
\bibitem [{\citenamefont {Hastings}\ and\ \citenamefont {Koma}(2006)}]{HK06}%
  \BibitemOpen
  \bibfield  {author} {\bibinfo {author} {\bibfnamefont {M.~B.}\ \bibnamefont
  {Hastings}}\ and\ \bibinfo {author} {\bibfnamefont {T.}~\bibnamefont
  {Koma}},\ }\href {\doibase 10.1007/s00220-006-0030-4} {\bibfield  {journal}
  {\bibinfo  {journal} {Commun. Math. Phys.}\ }\textbf {\bibinfo {volume}
  {265}},\ \bibinfo {pages} {781} (\bibinfo {year} {2006})}\BibitemShut
  {NoStop}%
\bibitem [{\citenamefont {Nachtergaele}\ \emph {et~al.}(2006)\citenamefont
  {Nachtergaele}, \citenamefont {Ogata},\ and\ \citenamefont
  {Sims}}]{NOS06propg}%
  \BibitemOpen
  \bibfield  {author} {\bibinfo {author} {\bibfnamefont {B.}~\bibnamefont
  {Nachtergaele}}, \bibinfo {author} {\bibfnamefont {Y.}~\bibnamefont {Ogata}},
  \ and\ \bibinfo {author} {\bibfnamefont {R.}~\bibnamefont {Sims}},\ }\href
  {http://link.springer.com/article/10.1007%2Fs10955-006-9143-6} {\bibfield
  {journal} {\bibinfo  {journal} {J. Stat. Phys.}\ }\textbf {\bibinfo {volume}
  {124}},\ \bibinfo {pages} {1} (\bibinfo {year} {2006})}\BibitemShut {NoStop}%
\bibitem [{\citenamefont {Poulin}(2010)}]{poulin2010lieb}%
  \BibitemOpen
  \bibfield  {author} {\bibinfo {author} {\bibfnamefont {D.}~\bibnamefont
  {Poulin}},\ }\href {\doibase 10.1103/PhysRevLett.104.190401} {\bibfield
  {journal} {\bibinfo  {journal} {Phys. Rev. Lett.}\ }\textbf {\bibinfo
  {volume} {104}},\ \bibinfo {pages} {190401} (\bibinfo {year}
  {2010})}\BibitemShut {NoStop}%
\bibitem [{\citenamefont {Nachtergaele}\ \emph {et~al.}(2011)\citenamefont
  {Nachtergaele}, \citenamefont {Vershynina},\ and\ \citenamefont
  {Zagrebnov}}]{NVZ2011lieb}%
  \BibitemOpen
  \bibfield  {author} {\bibinfo {author} {\bibfnamefont {B.}~\bibnamefont
  {Nachtergaele}}, \bibinfo {author} {\bibfnamefont {A.}~\bibnamefont
  {Vershynina}}, \ and\ \bibinfo {author} {\bibfnamefont {V.~A.}\ \bibnamefont
  {Zagrebnov}},\ }\href {https://arxiv.org/abs/1103.1122} {\bibfield  {journal}
  {\bibinfo  {journal} {AMS Contemp. Math.}\ }\textbf {\bibinfo {volume}
  {552}},\ \bibinfo {pages} {161} (\bibinfo {year} {2011})}\BibitemShut
  {NoStop}%
\bibitem [{\citenamefont {Barthel}\ and\ \citenamefont
  {Kliesch}(2012)}]{BK2012}%
  \BibitemOpen
  \bibfield  {author} {\bibinfo {author} {\bibfnamefont {T.}~\bibnamefont
  {Barthel}}\ and\ \bibinfo {author} {\bibfnamefont {M.}~\bibnamefont
  {Kliesch}},\ }\href {\doibase 10.1103/PhysRevLett.108.230504} {\bibfield
  {journal} {\bibinfo  {journal} {Phys. Rev. Lett.}\ }\textbf {\bibinfo
  {volume} {108}},\ \bibinfo {pages} {230504} (\bibinfo {year}
  {2012})}\BibitemShut {NoStop}%
\bibitem [{\citenamefont {Descamps}(2013)}]{descamps2013}%
  \BibitemOpen
  \bibfield  {author} {\bibinfo {author} {\bibfnamefont {B.}~\bibnamefont
  {Descamps}},\ }\href {\doibase 10.1063/1.4820785} {\bibfield  {journal}
  {\bibinfo  {journal} {J. Math. Phys.}\ }\textbf {\bibinfo {volume} {54}},\
  \bibinfo {pages} {092202} (\bibinfo {year} {2013})}\BibitemShut {NoStop}%
\bibitem [{\citenamefont {Kliesch}\ \emph {et~al.}(2014)\citenamefont
  {Kliesch}, \citenamefont {Gogolin},\ and\ \citenamefont
  {Eisert}}]{kliesch2014lieb}%
  \BibitemOpen
  \bibfield  {author} {\bibinfo {author} {\bibfnamefont {M.}~\bibnamefont
  {Kliesch}}, \bibinfo {author} {\bibfnamefont {C.}~\bibnamefont {Gogolin}}, \
  and\ \bibinfo {author} {\bibfnamefont {J.}~\bibnamefont {Eisert}},\ }in\
  \href {\doibase 10.1007/978-3-319-06379-9_17} {\emph {\bibinfo {booktitle}
  {Many-Electron Approaches in Physics, Chemistry and Mathematics}}}\ (\bibinfo
   {publisher} {Springer International Publishing},\ \bibinfo {year} {2014})\
  pp.\ \bibinfo {pages} {301--318}\BibitemShut {NoStop}%
\bibitem [{\citenamefont {Swingle}\ and\ \citenamefont
  {Halpern}(2018)}]{Swingle:2018aa}%
  \BibitemOpen
  \bibfield  {author} {\bibinfo {author} {\bibfnamefont {B.}~\bibnamefont
  {Swingle}}\ and\ \bibinfo {author} {\bibfnamefont {N.~Y.}\ \bibnamefont
  {Halpern}},\ }\href {https://arxiv.org/abs/1802.01587} {\bibfield  {journal}
  {\bibinfo  {journal} {arXiv: 1802.01587}\ } (\bibinfo {year}
  {2018})}\BibitemShut {NoStop}%
\end{thebibliography}

\end{document}